\documentclass[11pt,a4paper]{article}
\usepackage{amsfonts,amssymb,amsthm}
\usepackage{cite} 
\usepackage[T2A]{fontenc}
\usepackage[cp1251]{inputenc}
\usepackage[english,russian]{babel}
\usepackage{amsmath}%
\usepackage[unicode]{hyperref}
\usepackage[left=1.5cm,right=1.5cm,
    top=2cm,bottom=3cm,bindingoffset=0cm]{geometry}

\newtheorem{lemma}{Лемма}[section]
\newtheorem{theorem}{Теорема}[section]

\newtheorem{definition}{Определение}[section]
\newtheorem{proposition}{Предложение}[section]

\def\const {\hbox{const}}

\allowdisplaybreaks
\numberwithin{equation}{section}

\title{\bf О классификации нелинейных интегрируемых трехмерных цепочек при помощи характеристических алгебр Ли}
\author{\bf И.Т.~Хабибуллин, А.Р.~Хакимова }

\date{} 

\begin{document}
\maketitle



\abstract {В статье продолжается работа по описанию интегрируемых нелинейных цепочек с тремя независимыми переменными следующего вида  $u^j_{n+1,x}=u^j_{n,x}+f(u^{j+1}_{n},u^{j}_n,u^j_{n+1 },u^{j-1}_{n+1})$ по признаку наличия иерархии редукций,  интегрируемых в смысле Дарбу, начатая в \cite{KHKh}. В основе классификационного алгоритма лежит хорошо известный факт, что характеристические алгебры интегрируемых по Дарбу систем имеют конечную размерность. В работе мы пользовались характеристической алгеброй по направлению $x$, структура которой для данного класса моделей определяется некоторым полиномом $P(\lambda)$, степень которого для известных примеров не превосходит трех. В статье предполагается, что $P(\lambda)=\lambda^2$, в этом случае классификационная задача сводится к отысканию восьми неизвестных функций одной переменной. В работе получен достаточно узкий класс претендентов на интегрируемость, среди которых имеется новый пример интегрируемой цепочки.} 

\vspace{0.5cm}

\textbf{Ключевые слова:} {трехмерные цепочки, характеристические алгебры, интегрируемость по Дарбу, характеристические интегралы, интегрируемые редукции.}

\vspace{0.5cm}

\large

\section{Введение}

Теория интегрируемости является важной составляющей современной математической физики. К настоящему времени задача исчерпывающего описания интегрируемых представителей для широкого класса нелинейных непрерывных и дискретных моделей размерности 1+1, наиболее интересных с точки зрения приложений, близка к завершению. При решении этой задачи успешно использовались классификационные алгоритмы, основанные на симметрийном подходе \cite{ISh}, \cite{ZhSh}, \cite{SS}, \cite{MShY}, \cite{AShY}, \cite{Sokolov}. Для классификации уравнений с тремя и более независимыми переменными метод симметрий не подходит из-за проблем с нелокальностями. 
Кратко обсудим наиболее популярные методы классификации интегрируемых моделей в 3D. Метод классификации трехмерных уравнений, основанный на существовании гидродинамических редукций является одним из наиболее распространенных среди специалистов по математической физике. Он создавался, апробировался и использовался в серии работ  \cite{GK}, \cite{GT}, \cite{FMN}, \cite{SO},   \cite{Novikov}, \cite{FNR}. Метод классификации трехмерных интегрируемых цепочек, использующий идею совместности был разработан в статьях \cite{ABS1}, \cite{ABS2}. Существуют и альтернативные подходы к изучению и классификации уравнений в 3D (см., например, работы \cite{Pavlov}, \cite{BK}, \cite{CK} и ссылки в них).

В настоящей работе продолжается исследование задачи классификации нелинейных цепочек вида 
\begin{equation}\label{jnx-part}
u^j_{n+1,x}=u^j_{n,x}+f(u^{j+1}_{n},u^{j}_n,u^j_{n+1 },u^{j-1}_{n+1}),\quad \frac{\partial f}{\partial u^{j+1}_{n}}\neq0,\quad \frac{\partial f}{\partial u^{j-1}_{n+1}}\neq0 
\end{equation}
линейных по производным $u^j_{n,x}=\frac{d}{dx}u^j_n$, начатое в \cite{KHKh}. В уравнении \eqref{jnx-part} искомая функция $u^j_n(x)$ зависит от трех аргументов: непрерывной переменной $x$ и дискретных~$n$~и~$j$. Здесь $f$ предполагается аналитической функцией, заданной в некоторой области пространства \textbf{$C^4$}.  

Наш подход к задаче классификации интегрируемых цепочек в 3D основан на следующем любопытном наблюдении. Известные примеры интегрируемых уравнений с тремя независимыми переменными, когда хотя бы одна из переменных является дискретной, допускают редукции в виде интегрируемых в смысле Дарбу систем уравнений гиперболического типа с двумя независимыми переменными. Эти редукции получаются в результате наложения граничных условий типа обрыва по дискретному аргументу (см., например, \cite{Hab13}, \cite{KuznetsovaHabibullin18}, \cite{HabKh21}).
В работах \cite{KuznetsovaHabibullin20}, \cite{Ferapontov} такое свойство интегрируемых трехмерных цепочек было успешно использовано при классификации цепочек с одной дискретной и двумя непрерывными переменными. 

В работе \cite{KHKh} мы адаптировали алгебраический метод классификации на случай цепочек вида \eqref{jnx-part} с одной непрерывной и двумя дискретными независимыми переменными.	Мы предполагаем, что существуют функции $f^{-N_2}$ и $f^{N_1}$ такие, что система дифференциально-разностных уравнений гиперболического типа следующего вида 
\begin{equation}\label{sys}
u^j_{n+1,x}=u^j_{n,x}+f^{j}_n, \qquad -N_2\leq j\leq N_1, 
\end{equation}
где 
\begin{align*}
&f^{-N_2}_n=f^{-N_2}(u^{-N_2+1}_{n},u^{-N_2}_n,u^{-N_2}_{n+1 }),\\
&f^j_{n}=f(u^{j+1}_{n},u^{j}_n,u^j_{n+1 },u^{j-1}_{n+1}), \quad -N_2+1\leq j\leq N_1-1,\\
&f^{N_1}_n=f^{N_1}(u^{N_1}_n,u^{N_1}_{n+1 }, u^{N_1-1}_{n+1}),
\end{align*}
будет интегрируемой по Дарбу для любой пары неотрицательных чисел $N_1, N_2$. Тогда цепочки \eqref{jnx-part}, соответствующие найденным функциям $f=f(u^{j+1}_{n},u^{j}_n,u^j_{n+1 },u^{j-1}_{n+1})$, предположительно будут интегрируемыми. 

Напомним, что система уравнений \eqref{sys} является интегрируемой в смысле Дарбу, если она допускает полные наборы характеристических интегралов по каждому из характеристических направлений $x$ и $n$. Понятие характеристического интеграла дифференциального уравнения гиперболического типа впервые было введено в работе Дарбу \cite{Darboux}. Эффективный алгебраический критерий интегрируемости системы гиперболических уравнений был предложен А.Б.~Шабатом: конечномерность характеристической алгебры по каждому из характеристических направлений является необходимым и достаточным условием интегрируемости в смысле Дарбу системы гиперболических уравнений (см. \cite{Yamilov}, \cite{LSSh}, \cite{ZMHS}).
Теория интегрируемости по Дарбу полностью переносится на системы дифференциально-разностных и чисто дискретных уравнений \cite{Adler}, \cite{HPZh2008}, \cite{HPZh2009}, \cite{Smirnov}, \cite{ZhiberKuznetsova21}, \cite{HabKh22}.

В работе \cite{KHKh} было доказано, что необходимым условием конечномерности алгебры $L_x$ системы \eqref{sys} по направлению $x$ является выполнение равенства
\begin{equation}\label{eqs}
P(Z_k)f(u^{1}_{n},u^{0}_n,u^0_{n+1 },u^{-1}_{n+1})=0,\quad \mbox{для}\quad k=-1,0,1,
\end{equation}
где операторы $Z_k$ имеют следующий вид
$$Z_{-1}=\frac{\partial}{\partial u^{-1}_n},\quad Z_{0}=\frac{\partial}{\partial u^{0}_n}+ \frac{\partial}{\partial u^{0}_{n+1}},\quad Z_{1}=\frac{\partial}{\partial u^{1}_n},$$
а $P(\lambda)$ некоторый многочлен с постоянными коэффициентами
\begin{equation}\label{pol}
P(\lambda)=\lambda^M+a_1\lambda^{M-1}+ \dots +a_{M-1}\lambda
\end{equation}
с нулевым свободным членом. Далее предполагается, что $P(\lambda)$ многочлен наименьшей степени, удовлетворяющий условиям  \eqref{eqs}.

Задача об описании класса функций $f$ для которых система \eqref{sys} является интегрируемой по Дарбу естественным образом распадается на четыре варианта
\begin{enumerate}
	\item  $P(\lambda)=\lambda^2+a_{1}\lambda,\quad a_1\neq0$,
	\item  $P(\lambda)=\lambda^2,$
	\item  $P(\lambda)=\lambda^3+a_{1}\lambda^2+a_{2}\lambda,$
	\item  $\deg P(\lambda)\geq4$.
\end{enumerate}
Первый из перечисленных вариантов был исследован в работе \cite{KHKh}, где показано,  что для интегрируемости цепочки \eqref{jnx-part} в этом случае необходимо, чтобы она имела вид
\begin{align*}
u^j_{n+1,x}=u^j_{n,x}+k_1e^{u^{j-1}_{n+1}}+ e^{u^j_{n}}+ e^{u^{j}_{n+1}}+ k_2e^{u^{j+1}_{n}},
\end{align*}
где $k_1$, $k_2$ некоторые постоянные. Причем когда  $k_1=k_2=-1$ получаем известную интегрируемую цепочку, найденную ранее в работе \cite{Adler}. 

В настоящей работе мы исследуем случай 2. Случаи 3 и 4 остаются неизученными с точки зрения классификации по признаку интегрируемых по Дарбу редукций. Несколько примеров интегрируемых цепочек соответствующих случаю 3 можно найти в работе \cite{FNR}.
Примеры цепочек с функцией $f$, соответствующей пункту 4 авторам неизвестны. 

Основной результат работы представлен в Теоремах~\ref{final} и \ref{3.2}. В Теореме~\ref{final} утверждается, что если цепочка \eqref{jnx-part} с функцией $f$,  заданной в виде \eqref{form-f} интегрируема в смысле определения \ref{def}, то $f$ принадлежит одному из трех классов, содержащих неопределенные константы. В \S 3 детально обсуждается один из этих классов, в котором выделяются два подкласса. Представители одного из подклассов зависят от двух целочисленных параметров, в то время как цепочки из второго подкласса зависят от одного комплексного параметра (см. Теорема~\ref{3.2}). В \S 4 предъявлены три конкретные цепочки, для двух из них показано, что характеристические алгебры по направлению $x$ имеют конечные размерности, третья цепочка является новым интегрируемым примером. Для завершения классификации необходимо исследовать полученные цепочки с помощью характеристической алгебры по дискретному направлению~$n$.

\section{Классификационная задача}

Нетрудно проверить, что общее решение $f$ системы \eqref{eqs} при условии, что $P(\lambda)=\lambda^2$ представимо в виде  
\begin{align}\label{form-f}
\begin{aligned}
f\left(u^1_n, u^0_n, u^{0}_{n+1}, u^{-1}_{n+1}\right)=&A_0u^1_nu^0_nu^{-1}_{n+1}+A_1u^0_nu^{-1}_{n+1}+A_2u^1_nu^0_n+A_3u^1_nu^{-1}_{n+1}\\
&+A_4u^1_n+A_5u^0_n+A_6u^{-1}_{n+1}+A_7,
\end{aligned}
\end{align}
где коэффициенты $A_i=A_i(\tau_n)$ являются произвольными функциями от переменной $\tau_n=u^0_n-u^{0}_{n+1}$. Основная цель работы состоит в выводе необходимых условий на эти функции из предположения о конечномерности характеристической алгебры $L_x$ системы \eqref{sys} по направлению $x$. Ниже мы обсудим определение и некоторые свойства этой алгебры.

\subsection{Характеристическая алгебра по направлению $x$}

Напомним, что характеристическая алгебра $L_x$ системы \eqref{sys} по направлению $x$ определяется как алгебра Ли-Райнхарта над кольцом локально-аналитических функций  от динамических переменных, порожденная набором векторных полей, называемых характеристическими операторами (см. \cite{KHKh}):
\begin{align}
X_j=&\sum_{k=-\infty}^{\infty}\frac{\partial}{\partial u^j_{n+k}}, \quad j=-N_2,-N_2+1,\ldots,N_1, \label{Xj} \\
Y=&\sum_{j=-N_2}^{N_1}\left(f^j_{n}\frac{\partial}{\partial u^j_{n+1}}-f^j_{n-1}\frac{\partial}{\partial u^j_{n-1}}\right. \nonumber \\
&\left.+\left(f^j_{n}+f^j_{n+1}\right)\frac{\partial}{\partial u^j_{n+2}}-\left(f^j_{n-1}+f^j_{n-2}\right)\frac{\partial}{\partial u^j_{n-2}}+\ldots\right). \label{Y}
\end{align}
Отображение, действующее по правилу 
\begin{align}\label{auto-main}
Z\mapsto D_nZD^{-1}_n,
\end{align}
является автоморфизмом алгебры $L_x$ (см. \cite{ZMHS}). Для образующих алгебры имеем равенства 
\begin{align}\label{auto-Xj-Y}
D_nX_jD^{-1}_n=X_j, \qquad D_nYD^{-1}_n=Y-\sum_{j=-N_2}^{N_1}f^j_nX_j.
\end{align}

Условие конечномерности алгебры $L_x$ является необходимым и достаточным условием существования полного набора $x$-интегралов системы \eqref{sys}. Поэтому для того, чтобы система \eqref{sys} была интегрируема в смысле Дарбу необходимо, чтобы алгебра $L_x$ имела конечную размерность. Этот факт лежит в основе классификационного алгоритма, которым мы пользуемся при уточнении вида функции \eqref{form-f}.   

Для того, чтобы цепочка \eqref{jnx-part} с функцией $f$,  заданной в виде \eqref{form-f} допускала вырожденные обрывы с обеих сторон необходимо, чтобы она имела стационарное решение вида $u^j_n(x)=0$.  Это накладывает определенные ограничения на коэффициенты  $A_i(\tau_n)$, они должны быть непрерывны в нуле и, кроме того, требуется, чтобы $A_i(0)=0$. 

С учетом этого предположения система \eqref{sys} принимает вид
\begin{equation}\label{r-sys}
\begin{aligned}
&u^{N_1}_{n+1,x}=u^{N_1}_{n,x}+A_1(\tau^{N_1}_n)u^{N_1}_nu^{N_1-1}_{n+1}+A_5(\tau^{N_1}_n)u^{N_1}_n+A_6(\tau^{N_1}_n)u^{N_1-1}_{n+1}+A_7(\tau^{N_1}_n),\\
&u^{j}_{n+1,x}=u^{j}_{n,x}+f^j_n,\\
&u^{-N_2}_{n+1,x}=u^{-N_2}_{n,x}+A_2(\tau^{-N_2}_n)u^{-N_2+1}_nu^{-N_2}_{n}\\
&\qquad +A_4(\tau^{-N_2}_n)u^{-N_2+1}_n+A_5(\tau^{-N_2}_n)u^{-N_2}_{n}+A_7(\tau^{-N_2}_n),
\end{aligned}
\end{equation}
где $f^j=f\left(u^{j+1}_n, u^j_n, u^j_{n+1}, u^{j-1}_{n+1}\right)$, а функция $f$ задана в \eqref{form-f}. Ниже мы будем пользоваться следующим определением.
\begin{definition}\label{def}
Будем называть цепочку \eqref{jnx-part}, \eqref{form-f} интегрируемой, если характеристические алгебры системы \eqref{r-sys} по направлениям $n$ и $x$ имеют конечную размерность для любого выбора $N_1\geq 0$ и $N_2\geq 0$.
\end{definition}

\subsection{Поиск функции $A_0$}

Цель настоящего раздела состоит в определении функции $A_0(\tau)$. Имеет место следующая 
\begin{theorem}\label{ThA0}
Если $\dim L_x < \infty$, то функция  $A_0(\tau)$ имеет вид
\begin{equation}\label{A0}
A_0(\tau)=c_0\tau,
\end{equation}
где $c_0$ -- некоторая постоянная.
\end{theorem}
\begin{proof}
Сосредоточимся на подалгебре алгебры $L_x$ порожденной операторами\footnote{Здесь мы воспользуемся схемой предложенной ранее в \cite{HPZh2008}} 
\begin{equation}\label{T0T-A0}
T_0:=X_0, \qquad T=\left[X_1,\left[X_{-1},Y\right]\right].
\end{equation}
Рассмотрим последовательность кратных коммутаторов этих операторов:
\begin{equation}\label{seq-T0T-A0}
T_1=\left[T,T_0\right], \quad T_2=\left[T,T_1\right], \quad\ldots, \quad  T_{k+1}=\left[T,T_k\right], \quad k\geq0.
\end{equation}
Автоморфизм \eqref{auto-main} алгебры $L_x$ является полезным инструментом при изучении последовательностей. Пользуясь формулами \eqref{auto-Xj-Y} легко вывести следующие соотношения 
\begin{align*}
D_nTD^{-1}_n=T-\left(A_0u_n^0+A_3\right)T_0,\qquad
D_nT_1D^{-1}_n=T_1+A_0T_0.
\end{align*}
Можно проверить, что для $k\geq 2$ оператор $D_nT_kD^{-1}_n$ выражается не только через элементы последовательности \eqref{seq-T0T-A0}, но и через операторы, полученные применением степеней оператора $ad_{T_0}$, действующего по правилу $ad_{T_0}Z=[T_0,Z]$. Поэтому мы расширим последовательность \eqref{seq-T0T-A0} следующим образом. Введем мультииндекс $\alpha=(k,0,i_1,i_2,\ldots,i_{r-1},i_r)$, где $k$ некоторое натуральное число, переменная $i_j$ для $\forall j$ принимает одно из двух значений $0$ или $1$. Мультииндекс присваивается оператору согласно формуле
\begin{equation}\label{Talpha-A0}
T_\alpha= 
\begin{cases}
\left[T_0,T_{k,0,i_1,i_2,\ldots,i_{r-1}}\right], &\mbox{если}\quad i_r=0,\\
\left[T,T_{k,0,i_1,i_2,\ldots,i_{r-1}}\right], &\mbox{если}\quad i_r=1.
 \end{cases}
\end{equation}
Мы частично упорядочим множество операторов, пронумерованных мультииндексом, введением функции
\begin{equation}\label{malpha-A0}
m(\alpha)= 
\begin{cases}
k, &\mbox{если}\quad \alpha=k,\\
k, &\mbox{если}\quad \alpha=k,0,\\
k+i_1+i_2+\cdots+i_r, &\mbox{если}\quad \alpha=(k,0,i_1,i_2,\ldots,i_{r-1},i_r).
 \end{cases}
\end{equation}
Опишем действие автоморфизма \eqref{auto-main} на элементы последовательности операторов \eqref{Talpha-A0}. Легко можно проверить следующую формулу:
\begin{equation*}
D_nT_kD^{-1}_n=T_k+A_0T_{k-1}-\left(A_0u_n^0+A_3\right)\sum_{m(\beta)=k-1}T_{\beta}+\sum_{m(\beta)\leq k-2}\eta(k,\beta)T_{\beta}, \quad k\geq2.
\end{equation*}
Для операторов $T_\alpha$ не принадлежащих последовательности \eqref{seq-T0T-A0} имеем
\begin{equation*}
D_nT_\alpha D^{-1}_n=T_\alpha+\sum_{m(\beta)\leq m(\alpha)-1}\eta(\alpha,\beta)T_{\beta}.
\end{equation*}
Построим базис $P$ в линейном пространстве, порожденном элементами последовательности кратных коммутаторов операторов $T_0$, $T$. Для этого выполним следующие действия.
\begin{enumerate}
	\item[1)] Включим линейно независимые операторы $T_0$ и $T$ в базис $P$.
	\item[2)] Если оператор $T_1=\left[T,T_0\right]$ не зависит от $T_0$ и $T$, то включаем его в $P$. В противном случае $P$ состоит из двух операторов: $T_0$ и $T$.
	\item[3)] Предполагая, что $T_0, T, T_1 \in P$ изучим на предмет линейной зависимости четверку операторов $\left\{T_0, T, T_1, T_2\right\}$. Если операторы линейно независимы, то включим в $P$ все векторные поля $T_\beta$, где $m(\beta)=2$, $\beta\in I_2$ (в действительности, $I_2$ является набором таких $\beta$), такие, что набор операторов $J_2=\left\{T_0, T, T_1, T_2, \cup_{\beta\in I_2}T_\beta\right\}$ является линейно независимым, причем произвольное $T_\gamma$, $m(\gamma)=2$ является линейной комбинацией операторов из $J_2$.
	\item[4)] Далее проверяем является ли множество $T_3\cup J_2$ линейно независимой системой. Если нет, то не включаем в $P$ оператор $T_3$, а значит и все операторы $T_\beta$ с мерой $m(\beta)=3$. Если да, то мы включаем в $P$ все операторы $T_\beta$, где $m(\beta)=3$, $\beta\in I_3$ такие, что набор операторов $J_3=\cup_{m(\beta)=3, \,\beta\in I_3}T_\beta\cup\left\{T_3\right\}\cup J_2$ является линейно независимым множеством, причем для $\forall \gamma$, $m(\gamma)=3$ оператор $T_\gamma$ представим в виде линейной комбинации операторов из $J_3$.
\end{enumerate}
Продолжая этот процесс получим полный базис $P$ в линейном пространстве кратных коммутаторов.

В силу конечномерности рассматриваемой алгебры Ли-Райнхарта, начиная с некоторого натурального $N$ оператор $T_{N+1}$ линейно выражается через операторы $T_\beta$ для которых $m(\beta)\leq N$:
\begin{equation}\label{TN1-A0}
T_{N+1}=\mu(N+1,N)T_N+\sum_{T_\beta\in P,\, m(\beta)\leq N}\mu(N+1,\beta)T_\beta.
\end{equation}
По построению для $T_\gamma \notin P$, $m(\gamma)\leq N$ имеем также представление 
\begin{equation*}
T_\gamma=\sum_{T_\beta\in P,\, m(\beta)\leq m(\gamma)}\mu(\gamma,\beta)T_\beta.
\end{equation*}
Далее нам понадобится следующая лемма.
\begin{lemma}
Для любого векторного поля $T_\alpha \notin P$, $m(\alpha)=N$ в представлении 
\begin{equation*}
T_\alpha=\mu(\alpha,N)T_N+\sum_{T_\beta\in P}\mu(\alpha,\beta)T_\beta
\end{equation*}
коэффициент $\mu(\alpha,N)$ является константой.
\end{lemma}
\begin{proof}
Применим к последнему равенству оператор сопряжения и получим
\begin{equation*}
D_nT_\alpha D^{-1}_n=T_\alpha+\sum_{m(\beta)\leq N-1}\eta(\alpha,\beta)T_{\beta}
=\mu(\alpha,N)T_N+\sum_{T_\beta\in P}\mu(\alpha,\beta)T_\beta+\sum_{m(\beta)\leq N-1}\eta(\alpha,\beta)T_{\beta}.
\end{equation*}
С другой стороны имеем:
\begin{align*}
D_nT_\alpha D^{-1}_n&=D_n\left(\mu(\alpha,N)\right)D_nT_N D^{-1}_n+\sum_{T_\beta\in P}D_n\left(\mu(\alpha,\beta)\right)D_nT_\beta D^{-1}_n=\\
&=D_n\left(\mu(\alpha,N)\right)\left(T_N+\cdots\right)+\sum_{T_\beta\in P}D_n\left(\mu(\alpha,\beta)\right)\left(T_\beta+\cdots\right).
\end{align*}
Сравнивая в этих двух представлениях коэффициенты при $T_N$ находим $\mu(\alpha,N)=D_n\left(\mu(\alpha,N)\right)$ из которого следует, что $\mu(\alpha,N)$ является постоянной. Лемма доказана.
\end{proof}
Продолжим доказательство Теоремы~\ref{ThA0}. Подействуем на разложение \eqref{TN1-A0} автоморфизмом \eqref{auto-main} и получим слева
\begin{align*}
T_{N+1}&+A_0T_{N}-\left(A_0u_n^0+A_3\right)\sum_{m(\beta)=N}T_{\beta}+\sum_{m(\beta)\leq N-1}\eta(N+1,\beta)T_{\beta}=\\
&=\mu(N+1,N)T_N+\sum_{T_\beta\in P,\, m(\beta)\leq N}\mu(N+1,\beta)T_\beta+A_0T_{N}-\\
&-\left(A_0u_n^0+A_3\right)\sum_{m(\beta)=N}T_{\beta}+\sum_{m(\beta)\leq N-1}\eta(N+1,\beta)T_{\beta},
\end{align*}
соответственно, справа
\begin{align*}
D_n(\mu(N+1,N))\left(T_N+\cdots\right)+\sum_{T_\beta\in P,\, m(\beta)\leq N-1}D_n(\mu(N+1,\beta))\left(T_\beta+\cdots\right).
\end{align*}
Сравнивая коэффициенты при операторе $T_N$ справа и слева приходим к равенству:
\begin{align}\label{eq-A0}
D_n(\mu(N+1,N))-\mu(N+1,N)=A_0-c\left(A_0u_n^0+A_3\right).
\end{align}
Поскольку в \eqref{eq-A0} функции $A_0$ и $A_3$ зависят только от разности $u_n^0-u_{n+1}^0$, то очевидно, что коэффициент $\mu=\mu(N+1,N)$ может зависеть только от $u_n^0$. Исследуем уравнение \eqref{eq-A0} на искомую функцию $\mu=\mu(u_n^0)$:
\begin{align}\label{maineq-A0}
\mu(u_{n+1}^0)-\mu(u_n^0)=A_0(u^0_n-u^0_{n+1})-c\left(A_0(u^0_n-u^0_{n+1})u_n^0+A_3(u^0_n-u^0_{n+1})\right).
\end{align}
Дифференцируя равенство \eqref{maineq-A0} сначала по $u_{n+1}^0$, а затем по $u_{n}^0$ получим дифференциальные уравнения:
\begin{enumerate}
	\item[i)] $\mu'(u_{n+1}^0)=-A_0'(u^0_n-u^0_{n+1})+c\left(A_0'(u^0_n-u^0_{n+1})u_n^0+A_3'(u^0_n-u^0_{n+1})\right)$; 
	\item[ii)] $0=-A_0''(u^0_n-u^0_{n+1})+cA_0''(u^0_n-u^0_{n+1})u_n^0+c\left(A_0'(u^0_n-u^0_{n+1})+A_3''(u^0_n-u^0_{n+1})\right)$.
\end{enumerate}
Отсюда следует, что должны выполняться равенства:
\begin{align*}
A_0''(u^0_n-u^0_{n+1})=0,\qquad
c\left(A_0'(u^0_n-u^0_{n+1})+A_3''(u^0_n-u^0_{n+1})\right)=0.
\end{align*}
Решая первое из уравнений находим
\begin{align*}
A_0(u^0_n-u^0_{n+1})=c_0(u^0_n-u^0_{n+1})+c_{0,0},
\end{align*}
где $c_0$ и $c_{0,0}$ некоторые константы. Если $c\neq 0$, то из второго уравнения находим функцию $A_3$:
\begin{align*}
A_3(u^0_n-u^0_{n+1})=-\frac{c_0}{2}(u^0_n-u^0_{n+1})^2+c_{3,1}(u^0_n-u^0_{n+1})+c_{3,0}.
\end{align*}
Найденные выражения для $A_0$ и $A_3$ подставим в уравнение $i)$ выше и получим условие 
\begin{align*}
\mu'(u_{n+1}^0)=-c_0+cc_0u_n^0+c\left(-c_0(u^0_n-u^0_{n+1})+c_{3,1}\right),
\end{align*}
из которого следует, что 
\begin{align*}
\mu(u_{n+1}^0)=\frac{cc_0}{2}(u_{n+1}^0)^2+\left(cc_{3,1}-c_0\right)u_{n+1}^0+k,
\end{align*}
где $k$ -- некоторая постоянная.
Подставляя найденные функции в равенство \eqref{maineq-A0} после несложных преобразований приходим к соотношению:
\begin{align*}
c_{0,0}-cc_{0,0}u_n^0-cc_{3,0}=0.
\end{align*}
Отсюда, в силу предположения выше о том, что $c\neq 0$ получим $c_{0,0}=0$ и $c_{3,0}=0$. Следовательно, искомые функции имеют вид 
\begin{align*}
A_0(u^0_n-u^0_{n+1})=c_0(u^0_n-u^0_{n+1}), \quad A_3(u^0_n-u^0_{n+1})=-\frac{c_0}{2}(u^0_n-u^0_{n+1})^2+c_{3,1}(u^0_n-u^0_{n+1}).
\end{align*}
Остается исследовать случай $c=0$. В этом случае \eqref{maineq-A0} принимает вид
\begin{align*}
\mu(u_{n+1}^0)-\mu(u_n^0)=A_0(u^0_n-u^0_{n+1}).
\end{align*}
Исследуем это уравнение следуя схеме, использованной выше и найдем $A_0(u^0_n-u^0_{n+1})=c_0(u^0_n-u^0_{n+1})$,
при этом никаких условий на функцию $A_3(u^0_n-u^0_{n+1})$ не получается. Окончательно имеем
\begin{align*}
A_0(u^0_n-u^0_{n+1})=c_0(u^0_n-u^0_{n+1}).
\end{align*}
Теорема \ref{ThA0} доказана.
\end{proof}

\subsection{Уточнение функций $A_1$, $A_2$ и $A_5$}

Накладывая на цепочку \eqref{jnx-part}, \eqref{form-f} вырожденные условия обрыва $u^{-1}_n=0$ и $u^{2}_n=0$ получаем систему двух уравнений
\begin{equation}\label{sys2x2}
\begin{cases}
u^0_{n+1,x}=u^0_{n,x}+A_2u^1_nu^0_n+A_4u^1_n+A_5u^0_n+A_7,\\
u^1_{n+1,x}=u^1_{n,x}+\bar{A}_1u^1_nu^{0}_{n+1}+\bar{A}_5u^1_n+\bar{A}_6u^{0}_{n+1}+\bar{A}_7,
\end{cases}
\end{equation}
где $A_i=A_i(u^0_n-u^0_{n+1})$, $\bar{A}_i=A_i(u^1_n-u^1_{n+1})$. 
При таких условиях обрыва характеристические операторы $X_0$, $X_1$ и $Y$ имеют вид
\begin{align*}
X_0=&\sum^{\infty}_{k=-\infty}\frac{\partial}{\partial u^0_k}, \quad X_1=\sum^{\infty}_{k=-\infty}\frac{\partial}{\partial u^1_k},\\
Y=&f\frac{\partial}{\partial u^0_1}+g\frac{\partial}{\partial u^1_1}+(f+f_1)\frac{\partial}{\partial u^0_2}+(g+g_1)\frac{\partial}{\partial u^1_2}-\\
&-f_{-1}\frac{\partial}{\partial u^0_{-1}}+g_{-1}\frac{\partial}{\partial u^1_{-1}}-(f_{-1}+f_{-2})\frac{\partial}{\partial u^0_{-2}}-(g_{-1}+g_{-2})\frac{\partial}{\partial u^1_{-2}}+\cdots,
\end{align*}
где $f=A_2u^1_nu^0_n+A_4u^1_n+A_5u^0_n+A_7$, $g=\bar{A}_1u^1_nu^{0}_{n+1}+\bar{A}_5u^1_n+\bar{A}_6u^{0}_{n+1}+\bar{A}_7$, $f_i=D_n^if$, $g_i=D_n^ig$. 

Ниже нам понадобятся операторы 
\begin{align*}
&Y_0=\left[X_0,Y\right]=\left(A_2u^1_n+A_5\right)\frac{\partial}{\partial u^0_1}+\left(\bar{A}_1u^1_n+\bar{A}_6\right)\frac{\partial}{\partial u^1_1}+\cdots,\\
&Y_{0,1}=\left[X_1,Y_0\right]=A_2u^1_n\frac{\partial}{\partial u^0_1}+\bar{A}_1\frac{\partial}{\partial u^1_1}+\cdots.
\end{align*}
Пользуясь этими операторами построим последовательность операторов, лежащих в характеристической алгебре системы \eqref{sys2x2}:
\begin{align}\label{seqA1A2}
T=Y_0,\quad T_0=Y_{0,1}, \quad T_1=\left[T,T_0\right], \quad T_2=\left[T,T_1\right], \quad \ldots.
\end{align}
На элементы этой последовательности автоморфизм характеристической алгебры действует следующим образом
\begin{align*}
&D_nX_0D^{-1}_n=X_0, \quad D_nX_1D^{-1}_n=X_1, \quad D_nYD^{-1}_n=Y-fX_0-gX_1,\\
&D_nTD^{-1}_n=T-\left(A_2u^1_n+A_5\right)X_0-\left(\bar{A}_1u^1_n+\bar{A}_6\right)X_1,\\
&D_nT_0D^{-1}_n=T_0-A_2X_0-\bar{A}_1X_1,\\
&D_nT_1D^{-1}_n=T_1-\bar{A}_1T_0+b_{1,1}X_0+b_{2,1}X_1,
\end{align*}
где 
$b_{1,1}=-T(A_2)+T_0\left(A_2u^1_n+A_5\right)-\bar{A}_1A_2, \quad b_{2,1}=-T(\bar{A}_1)+T_0\left(\bar{A}_1u^1_n+\bar{A}_6\right)-(\bar{A}_1)^2.$
\begin{align*}
D_nT_2D^{-1}_n=&T_2-\bar{A}_1T_1+T(\bar{A}_1)T_0-b_{2,1}T_0-\\
&-\left(A_2u^1_n+A_5\right)\left[X_0,T_1\right]-\left(\bar{A}_1u^1_n+\bar{A}_6\right)\left[X_1,T_1\right]+b_{1,2}X_0+b_{2,2}X_1.
\end{align*}
Можно показать, что операторы последовательности коммутируют с $X_0$, т.е. $\left[X_0,T_k\right]=0$ при $k\geq1$. А коммутирование $T_k$ с $X_1$ приводят к операторам, вообще говоря, не выражающимся через элементы последовательности \eqref{seqA1A2}. Поэтому, для того, чтобы получить замкнутую относительно действия автоморфизма последовательность, необходимо дополнить \eqref{seqA1A2} операторами, пронумерованными мультииндексами. По аналогии с предыдущим разделом определим мультииндекс $\alpha=(k,0,i_1,i_2,\ldots,i_{r-1},i_r)$ по правилу 
\begin{equation}\label{Talpha-A1A2}
T_\alpha= 
\begin{cases}
\left[X_1,T_{k,0,i_1,i_2,\ldots,i_{r-1}}\right], &\mbox{если}\quad i_r=0,\\
\left[T,T_{k,0,i_1,i_2,\ldots,i_{r-1}}\right], &\mbox{если}\quad i_r=1.
 \end{cases}
\end{equation}
Опишем действие автоморфизма на операторы $T_\alpha$:
\begin{equation*}
D_nT_\alpha D^{-1}_n=T_\alpha+\sum_{m(\beta)\leq m(\alpha)-1}\eta(\alpha,\beta)T_{\beta}.
\end{equation*}
Функция $m(\beta)$ определена в \eqref{malpha-A0}. Для элементов последовательности \eqref{seqA1A2} действие автоморфизма определяется равенством
\begin{align*}
D_nT_kD^{-1}_n=&T_k-\bar{A}_1T_{k-1}-\left(\bar{A}_1u^1_n+\bar{A}_6\right)\sum_{m(\beta)=k-1}T_{\beta}+\sum_{m(\beta)\leq k-2}\eta(k,\beta)T_{\beta}.
\end{align*}
\begin{theorem}\label{ThA1A2}
Если характеристическая алгебра системы \eqref{sys2x2} имеет конечную размерность, то 
$$A_1(\tau)=c_1\tau, \qquad A_2(\tau)=c_2\tau, \qquad A_5(\tau)=c_5\tau, \qquad A_7(\tau)=-\frac{c_5}{2}\tau^2+c_7\tau,$$
где $c_1$, $c_2$, $c_5$, $c_7$ -- некоторые постоянные.
\end{theorem}
Доказательство теоремы \ref{ThA1A2} вполне аналогично доказательству теоремы \ref{ThA0}, поэтому мы дадим его схематично. При доказательстве того, что $A_1(\tau)=c_1\tau$ мы используем расширенную последовательность, полученную объединением последовательностей \eqref{seqA1A2} и \eqref{Talpha-A1A2}. Для доказательства равенства $A_2(\tau)=c_2\tau$ мы строим аналогичные последовательности, беря в качестве $T$ и $T_0$ операторы $Y_1=\left[X_1,Y\right]$ и $Y_{0,1}=\left[X_0,Y_1\right]$.

Для уточнения вида функции $A_5$ воспользуемся редукцией цепочки \eqref{jnx-part}, \eqref{form-f}: 
\begin{equation}\label{1-eq}
u_{n+1,x}=u_{n,x}+A_5u_n+A_7,
\end{equation}
полученной наложением вырожденных граничных условий $u^{-1}_n=0$ и $u^{1}_n=0$. 
Класс уравнений вида $u_{n+1,x}=u_{n,x}+f(u_n,u_{n+1})$, являющихся интегрируемыми по Дарбу описан в работах \cite{HPZh2008}, \cite{HPZh2009}. Из результатов этой работы вытекает, что уравнение \eqref{1-eq} должно иметь вид
\begin{equation*}
u_{n+1,x}=u_{n,x}+\frac{c_5}{2}\left(u^2_n-u^2_{n+1}\right)+c_7\left(u_n-u_{n+1}\right), 
\end{equation*}
где $c_5$, $c_7$ -- постоянные параметры. Следовательно, 
\begin{align*}
A_5=c_5\left(u_n-u_{n+1}\right),\qquad
A_7=-\frac{c_5}{2}\left(u_n-u_{n+1}\right)^2+c_7\left(u_n-u_{n+1}\right).
\end{align*}

\subsection{Замена переменных в характеристических векторных полях}\label{3.3}

Для упрощения структуры характеристических операторов перейдем от стандартного набора динамических переменных $\left\{  u^j_k\right\}$ к новым переменным  $\left\{ \tau ^j_k, \bar u^j_0\right\}$, полагая $\tau ^j_k=u^j_k-u^j_{k+1}$, $\bar u^j_0= u^j_0$. Тогда операторы дифференцирования преобразуются по правилу
$$\frac{\partial}{\partial u^j_0}=\frac{\partial}{\partial \bar u^j_0}+\frac{\partial}{\partial \tau^j_0}-\frac{\partial}{\partial \tau^j_{-1}},$$
а также при $k\neq 0$
$$\frac{\partial}{\partial u^j_k}=\frac{\partial}{\partial \tau^j_k}-\frac{\partial}{\partial \tau^j_{k-1}}.$$
В результате характеристические операторы принимают более простой компактный вид
\begin{align*}
Y=\sum_{j=-N_2}^{N_1}\sum_{n=-\infty}^{n=+\infty}f^j_{n}\frac{\partial}{\partial \tau^j_{n}}, \qquad X_j=\frac{\partial}{\partial u^j_{0}}, \quad j=-N_2,-N_2+1,\ldots,N_1.
\end{align*}
Здесь и всюду ниже мы опускаем черту над $u^j_0$.

Положим $Y_j:=[X_j,Y]$ и сосредоточимся на подалгебре $L'_x$ характеристической алгебры, порожденной операторами  $\left\{ Y_j\right\}^{N_1}_{j=-N_2}$. Оператор  $Y_j$ имеет следующий явный вид
\begin{align}
Y_j=\sum_{k=-\infty}^{+\infty}X_j(f^{j-1}_{k})\frac{\partial}{\partial \tau^{j-1}_{k}}+
X_j(f^{j}_{k})\frac{\partial}{\partial \tau^{j}_{k}}+
X_j(f^{j+1}_{k})\frac{\partial}{\partial \tau^{j+1}_{k}}. \label{-1}
\end{align}
Предполагается, что в представлении \eqref{-1} функции $f^j_k=f(u^{j+1}_{k},u^{j}_k,u^j_{k+1 },u^{j-1}_{k+1})$ переписаны в новых переменных с учетом равенств
$$u^j_k=u^j_0-\sum_{m=0}^{k-1}\tau^{j}_{m}\quad \mbox{для} \quad k>0,\qquad \quad u^j_k=u^j_0+\sum_{m=-1}^{k}\tau^{j}_{m}\quad \mbox{для} \quad k<0.$$

Формулу \eqref{-1} приведем к следующему виду, собрав коэффициенты при независимых переменных  $u^{j-1}_0$, $u^j_0$, $u^{j+1}_0$ и т.д.:
\begin{align*}
Y=&u^{j-2}_0u^{j-1}_0S_0^{j-1}+u^{j-1}_0u^{j+1}_0S_0^{j+1}+u^{j+1}_0u^{j+2}_0S_0^{j+1}+ \\
&+u^{j-2}_0S_0^{j-1}+u^{j+2}_0S_1^{j+1}+u^{j-1}_0S_2^{j}+u^{j+1}_0S_2^{j+1}+S_3^j,
\end{align*}
где $S^j_m$ -- преобразованные операторы, имеющие громоздкие выражения. После некоторого
упрощения, связанного с заменой $\tau^j_k=e^{w^j_k}$, они приводятся к виду
\begin{align}
S^j_0=&\sum_k c_0 \frac{\partial}{\partial w^{j}_{k}},\qquad \quad 
S^j_1=\sum_k c_0 \tilde \rho^j_k\frac{\partial}{\partial w^{j}_{k}}+\tilde A_3(w^{j}_{k})\frac{\partial}{\partial w^{j}_{k}},\nonumber \\
S^j_2=&\sum_k c_2\frac{\partial}{\partial w^{j-1}_{k}} +c_1\frac{\partial}{\partial w^{j}_{k}}+ 
c_0\tilde \rho^{j-2}_{k+1}\frac{\partial}{\partial w^{j-1}_{k}}+
  c_0 \tilde \rho^{j+1}_k\frac{\partial}{\partial w^{j}_{k}}, \nonumber\\
S_3^j=&\sum_k c_5 \frac{\partial}{\partial w^{j}_{k}}	+c_2 \tilde \rho^{j-1}_k\frac{\partial}{\partial w^{j-1}_{k}}+
\tilde A_4(w^{j-1}_{k})\frac{\partial}{\partial w^{j-1}_{k}} + c_0 \tilde \rho^{j-1}_k \tilde \rho^{j-2}_{k+1}\frac{\partial}{\partial w^{j-1}_{k}}+ \label{-2}\\
&+\tilde A_3(w^{j-1}_{k})\tilde \rho^{j-2}_{k+1}\frac{\partial}{\partial w^{j-1}_{k}}+ c_0 \tilde \rho^{j+1}_k \tilde \rho^{j-1}_{k+1}\frac{\partial}{\partial w^{j}_{k}}  + c_1 \tilde \rho^{j-1}_{k+1}\frac{\partial}{\partial w^{j}_{k}}+
c_2 \tilde \rho^{j+1}_k \frac{\partial}{\partial w^{j}_{k}}+ \nonumber\\
&+\tilde A_6(w^{j+1}_{k})\frac{\partial}{\partial w^{j+1}_{k}}+c_0 \tilde \rho^{j+1}_k \tilde \rho^{j+2}_k \frac{\partial}{\partial w^{j+1}_{k}}+c_1\tilde \rho^{j+1}_k \frac{\partial}{\partial w^{j+1}_{k}}+ \tilde A_3(w^{j+1}_{k})\tilde \rho^{j+2}_k \frac{\partial}{\partial w^{j+1}_{k}}, \nonumber
\end{align}
где $\tilde A_s(w^j_k)=A_s(e^{w^j_k})e^{-w^j_k}$ для $s=3,4,6$, функция $\tilde \rho^{j}_i$ имеет представление
\begin{equation}\label{Trho}
\tilde \rho^{j}_i= 
\begin{cases}
-e^{w^j_0}-e^{w^j_1}-e^{w^j_2}-\cdots-e^{w^j_{i-1}}, & i\geq1,\\
0, & i=0, \\
e^{w^j_{-1}}+e^{w^j_{-2}}+e^{w^j_{-3}}+\cdots+e^{w^j_{i}}, & i\leq -1.
 \end{cases}
-N_2+1\leq j \leq N_1-1,
\end{equation}

Воспользовавшись условием конечномерности построенной выше подалгебры $L'_x$ характеристической алгебры можно доказать следующее утверждение, уточняющее вид искомой функции $f$.

\begin{theorem}
Если характеристическая алгебра $L_x$ имеет конечную размерность, то $c_0=0$.
\end{theorem}

\begin{proof}
Предположим противное, т.е. пусть $c_0\neq 0$. Рассмотрим операторы 
\begin{align*}
W_0=\left[S_0^{j+2},S_2^{j+1}\right]=c_0^2\sum_{k}\tilde \rho^{j+2}_k\frac{\partial}{\partial w^{j+1}_{k}},\qquad
W_1=\left[S_0^{j+1},S_2^{j}\right]=c_0^2\sum_{k}\tilde \rho^{j+1}_k\frac{\partial}{\partial w^{j}_{k}}.
\end{align*}
Построим последовательность кратных коммутаторов этих операторов:
\begin{align*}
&W_2=\left[W_0,W_1\right]=c_0^4\sum_{k}\tilde\rho^{j+2}_k \tilde\rho^{j+1}_k\frac{\partial}{\partial w^{j}_{k}},\\
&W_3=\left[W_0,W_2\right]=c_0^6\sum_{k}\left(\tilde\rho^{j+2}_k\right)^2 \tilde\rho^{j+1}_k\frac{\partial}{\partial w^{j}_{k}},\\
&\ldots\ldots\ldots\ldots,\\
&W_N=\left[W_0,W_{N-1}\right]=c_0^{2N}\sum_{k}\left(\tilde\rho^{j+2}_k\right)^{N-1} \tilde\rho^{j+1}_k\frac{\partial}{\partial w^{j}_{k}}.
\end{align*}
Легко проверить, что линейная оболочка натянутая на элементы этой последовательности имеет бесконечную размерность. Действительно, для набора операторов $W_1$, $W_2$, \ldots, $W_N$ определитель части матрицы коэффициентов 
\begin{align*}
\begin{pmatrix}
c_0^2\tilde\rho^{j+1}_0 & c_0^2\tilde\rho^{j+1}_1 & \ldots & c_0^2\tilde\rho^{j+1}_{N-1}\\
c_0^4\tilde\rho^{j+1}_0\tilde\rho^{j+2}_0 & c_0^4\tilde\rho^{j+1}_1\tilde\rho^{j+2}_1 & \ldots & c_0^4\tilde\rho^{j+1}_{N-1}\tilde\rho^{j+2}_{N-1}\\
\ldots & \ldots & \ldots & \ldots\\
c_0^{2N}\tilde\rho^{j+1}_0\left(\tilde\rho^{j+2}_0\right)^{N-1} & c_0^{2N}\tilde\rho^{j+1}_1\left(\tilde\rho^{j+2}_1\right)^{N-1} & \ldots & c_0^{2N}\tilde\rho^{j+1}_{N-1}\left(\tilde\rho^{j+2}_{N-1}\right)^{N-1}
\end{pmatrix}
\end{align*}
отличен от нуля при $c_0\neq 0$, поскольку он сводится к определителю Вандермонда. Поэтому, наше предположение, что $c_0\neq 0$ неверно. Теорема доказана.
\end{proof}
Условие $c_0=0$ заметно упрощает набор операторов \eqref{-2}:
\begin{equation*}
\begin{aligned}
S^j_1=&\sum_k \tilde A_3(w^{j}_{k})\frac{\partial}{\partial w^{j}_{k}},\qquad  \quad
S^j_2=\sum_k c_2\frac{\partial}{\partial w^{j-1}_{k}} +c_1\frac{\partial}{\partial w^{j}_{k}},\\
S_3^j=&\sum_k c_5 \frac{\partial}{\partial w^{j}_{k}}	+c_2 \tilde \rho^{j-1}_k\frac{\partial}{\partial w^{j-1}_{k}}+
\tilde A_4(w^{j-1}_{k})\frac{\partial}{\partial w^{j-1}_{k}}+\tilde A_3(w^{j-1}_{k})\tilde \rho^{j-2}_{k+1}\frac{\partial}{\partial w^{j-1}_{k}}+ c_1 \tilde \rho^{j-1}_{k+1}\frac{\partial}{\partial w^{j}_{k}}+\\
&+c_2 \tilde \rho^{j+1}_k \frac{\partial}{\partial w^{j}_{k}}+\tilde A_6(w^{j+1}_{k})\frac{\partial}{\partial w^{j+1}_{k}}+c_1\tilde \rho^{j+1}_k \frac{\partial}{\partial w^{j+1}_{k}}+ \tilde A_3(w^{j+1}_{k})\tilde \rho^{j+2}_k \frac{\partial}{\partial w^{j+1}_{k}}.
\end{aligned}
\end{equation*}
Основная цель наших построений ниже состоит в отыскании наиболее простых операторов в $L_x$ с тем, чтобы вывести эффективные условия на искомую функцию $f$ из конечномерности алгебры. 

Оператор $R^j_0=c_2S^j_2-c_1S^{j+1}_2\in L_x$ очевидно имеет следующий простой вид
\begin{equation}\label{-3}
R^j_0=c_2^2\sum_{k}\frac{\partial}{\partial w^{j-1}_{k}}-c_1^2\sum_{k}\frac{\partial}{\partial w^{j+1}_{k}}.
\end{equation}
Ниже нам понадобятся операторы $R^j_1:=\left[R^j_0,S^{j-1}_3\right]$ и $R^j_2:=\left[R^j_0,S^{j+1}_3\right]$, которые также имеют простые явные выражения
\begin{align*}
R^j_1=c_1^2\sum_{k}\tilde A_3(w^{j}_{k})\tilde \rho^{j+1}_k \frac{\partial}{\partial w^{j}_{k}},\qquad 
R^j_2=c_2^2\sum_{k}\tilde A_3(w^{j}_{k})\tilde \rho^{j-1}_{k+1} \frac{\partial}{\partial w^{j}_{k}}.
\end{align*}
Предположим, что выполняются неравенства $c_1\neq0$, $c_2\neq0$ и построим новый оператор $\bar{S}^j_3=S^j_3-\frac{1}{c_1^2}R^{j+1}_1-\frac{1}{c_2^2}R^{j-1}_2$. Дадим его явный вид:
\begin{equation*}
\begin{aligned}
\bar{S}_3^j=&\sum_k c_5 \frac{\partial}{\partial w^{j}_{k}}	+c_2 \tilde \rho^{j-1}_k\frac{\partial}{\partial w^{j-1}_{k}}+
\tilde A_4(w^{j-1}_{k})\frac{\partial}{\partial w^{j-1}_{k}}+ c_1 \tilde \rho^{j-1}_{k+1}\frac{\partial}{\partial w^{j}_{k}}+\\
&+c_2 \tilde \rho^{j+1}_k \frac{\partial}{\partial w^{j}_{k}}+\tilde A_6(w^{j+1}_{k})\frac{\partial}{\partial w^{j+1}_{k}}+c_1\tilde \rho^{j+1}_k \frac{\partial}{\partial w^{j+1}_{k}}.
\end{aligned}
\end{equation*}
Оператор $\bar{S}_3^j$ нуждается в дальнейшем упрощении. Рассмотрим коммутаторы $\bar{W}^j=\left[S^j_2,\bar{S}^{j+1}_3\right]$, $\bar{H}^j=\left[S^j_2,\bar{S}^{j-1}_3\right]$, которые представимы в виде
\begin{equation}\label{WH}
\begin{aligned}
&\bar{W}^j=\sum_k c_2 \tilde \rho^{j}_k\frac{\partial}{\partial w^{j}_{k}}+\tilde A_4'(w^{j}_{k})\frac{\partial}{\partial w^{j}_{k}}+ c_1 \tilde \rho^{j}_{k+1}\frac{\partial}{\partial w^{j+1}_{k}},\\
&\bar{H}^j=\sum_k c_2 \tilde \rho^{j}_k \frac{\partial}{\partial w^{j-1}_{k}}+\tilde A_6'(w^{j}_{k})\frac{\partial}{\partial w^{j}_{k}}+c_1\tilde \rho^{j}_k \frac{\partial}{\partial w^{j}_{k}}.
\end{aligned}
\end{equation}
Итогом наших построений являются следующие два оператора из $L_x$:
\begin{align*}
&\bar{Q}^j=\frac{1}{c_1}\left[S^j_2,\bar{W}^j\right]-\bar{W}^j=\sum_k\left(\tilde A_4''(w^{j}_{k})-\tilde A_4'(w^{j}_{k})\right)\frac{\partial}{\partial w^{j}_{k}},\\
&\tilde{Q}^j=\frac{1}{c_1}\left[S^j_2,\bar{H}^j\right]-\bar{H}^j=\sum_k\left(\tilde A_6''(w^{j}_{k})-\tilde A_6'(w^{j}_{k})\right)\frac{\partial}{\partial w^{j}_{k}}.
\end{align*}
Оба эти оператора имеют вид $Q^j=\sum_k q(w^{j}_{k})\frac{\partial}{\partial w^{j}_{k}}$. Сначала мы найдем возможные варианты функции $q(w)$ при условии конечномерности алгебры $L_x$, затем по найденным $q(w)$ определим возможные значения искомых функций $\tilde A_4(w)$ и $\tilde A_6(w)$, воспользовавшись равенствами $q(w)=\tilde A_4''(w)-\tilde A_4'(w)$ и $q(w)=\tilde A_6''(w)-\tilde A_6'(w)$.
\begin{theorem}\label{SQ}
Если алгебра Ли-Райнхарта, порожденная операторами 
$$S^j_2=\sum_k c_2\frac{\partial}{\partial w^{j-1}_{k}} +c_1\frac{\partial}{\partial w^{j}_{k}} \quad \mbox{и} \quad Q^j=\sum_k q(w^{j}_{k})\frac{\partial}{\partial w^{j}_{k}}$$ 
имеет конечную размерность, то функция $q(w)$ представима в одной из следующих форм:
\begin{enumerate}
	\item[1)] $q(w)=k_{0,1}+k_{1,1}w+k_{2,1}w^2$;
	\item[2)] $q(w)=k_{0,2}+k_{1,2}e^{\lambda w}+k_{2,2}e^{-\lambda w}$,
\end{enumerate}
где $k_{i,j}$ -- некоторые постоянные.
\end{theorem}
Доказательство теоремы можно найти в \cite{HPZh2008}.
\begin{theorem}\label{SQH-1}
Пусть $q(w)=k_{0,1}+k_{1,1}w+k_{2,1}w^2$ и алгебра Ли-Райнхарта, порожденная операторами 
\begin{align*}
&S^j_2=\sum_k c_2\frac{\partial}{\partial w^{j-1}_{k}} +c_1\frac{\partial}{\partial w^{j}_{k}}, \qquad Q^j=\sum_k q(w^{j}_{k})\frac{\partial}{\partial w^{j}_{k}},\\
&\bar{H}^j=\sum_k c_2 \tilde \rho^{j}_k \frac{\partial}{\partial w^{j-1}_{k}}+\tilde A_6'(w^{j}_{k})\frac{\partial}{\partial w^{j}_{k}}+c_1\tilde \rho^{j}_k \frac{\partial}{\partial w^{j}_{k}} 
\end{align*}
имеет конечную размерность. Тогда $k_{1,1}=0$ и $k_{2,1}=0$.
\end{theorem}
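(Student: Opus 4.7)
The plan is to argue by contradiction: assuming $(k_{1,1},k_{2,1}) \neq (0,0)$, I will construct an infinite sequence of linearly independent operators in $L_x$, contradicting $\dim L_x < \infty$.

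The starting point is the bracket $G_1 := [Q^j, \bar H^j]$. Since $\tilde\rho^j_k$ depends on the $w^j_m$ on which $Q^j$ differentiates, and $q(w^j_l)$ is acted on by the $\partial_{w^j_l}$-component of $\bar H^j$, the commutator splits into two contributions: a $\partial_{w^{j-1}_k}$-term with coefficient $c_2 Q^j(\tilde\rho^j_k) = \mp c_2 \sum_m q(w^j_m)\,e^{w^j_m}$, summed over the support of $\tilde\rho^j_k$; and a $\partial_{w^j_k}$-term combining $q\,\tilde A_6''$, $q'\,\tilde A_6'$, $q'\,c_1\tilde\rho^j_k$, and $c_1 Q^j(\tilde\rho^j_k)$. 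Following the cleanup scheme of subsection~\ref{3.3} used to pass from $S^j_3$ to $\bar S^j_3$, I subtract suitable multiples of $Q^j$, $S^j_2$, $\bar H^j$, and their known brackets to obtain an element $\hat G_1 \in L_x$ whose only nonzero directional component lies in $\partial_{w^{j-1}_k}$, with clean coefficient $\sum_m q(w^j_m)\,e^{w^j_m}$.

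Next, iterate by setting $\hat G_{N+1} := [Q^j, \hat G_N]$. Since $\hat G_N$ has only $\partial_{w^{j-1}_k}$-components whose coefficients depend on the $w^j_m$ alone, while $Q^j$ differentiates only in the $\partial_{w^j_m}$-directions and its coefficient has no $w^{j-1}$-dependence, no further cleanup is needed; a direct computation gives $\hat G_{N+1} = \sum_k \bigl(\sum_m P_{N+1}(w^j_m)\,e^{w^j_m}\bigr)\,\partial_{w^{j-1}_k}$ with the polynomial recursion $P_{N+1}(w) = q(w)\bigl(P_N'(w) + P_N(w)\bigr)$ and $P_1 = q$. When $\deg q \geq 1$, one has $\deg P_N = N\,\deg q$, strictly increasing in $N$. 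Restricting to the coefficient of $\partial_{w^{j-1}_1}$ (whose support reduces to the single index $m = 0$), the functions $P_N(w^j_0)\,e^{w^j_0}$ for $N = 1, 2, \dots$ are pairwise linearly independent; hence so are $\hat G_1, \hat G_2, \ldots$ in $L_x$. This contradicts $\dim L_x < \infty$ and forces $\deg q = 0$, i.e.\ $k_{1,1} = k_{2,1} = 0$.

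The main obstacle I expect is the first step: because $\bar H^j$ has a nontrivial $\partial_{w^j_k}$-part tied to the unknown functions $\tilde A_6'$ and $\tilde A_6''$, extracting a truly ``diagonal'' $\hat G_1$ concentrated on $\partial_{w^{j-1}_k}$ demands a careful linear combination analogous to the construction of $\bar S^j_3$ in subsection~\ref{3.3}. Once such an element is isolated, the iteration is automatic, and the final linear-independence step reduces to the elementary fact that polynomials of strictly increasing degree are linearly independent over $\mathbb{C}$.
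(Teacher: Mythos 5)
Your proposal is correct and rests on the same core mechanism as the paper's proof: an infinite ladder of iterated commutators whose coefficients in the $\partial/\partial w^{j-1}_k$-directions are polynomials (times exponentials) of strictly increasing degree, contradicting $\dim L_x<\infty$ unless $\deg q=0$. The routes differ in the preliminary step. The paper first brackets $Q^j$ twice with $S^j_2$ to split $q$ into its monomial pieces, concluding that for $k_{2,1}\neq0$ the operators $\sum_k\partial/\partial w^j_k$, $\sum_k w^j_k\,\partial/\partial w^j_k$, $\sum_k (w^j_k)^2\,\partial/\partial w^j_k$ all lie in $L_x$; it then iterates the adjoint action of the clean degree-one operator $P^j_0=\sum_k w^j_k\,\partial/\partial w^j_k$ against $\bar H^j$, and kills $k_{1,1}$ by a separate, analogous run. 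You iterate with $Q^j$ itself, which handles $k_{1,1}$ and $k_{2,1}$ in one stroke via the recursion $P_{N+1}=q(P_N'+P_N)$, $\deg P_N=N\deg q$, with nonvanishing leading coefficients; what you lose is the clean monomial bookkeeping, what you gain is uniformity. One remark: the ``cleanup'' you single out as the main obstacle is dispensable. Since $Q^j$ has no $\partial/\partial w^{j-1}_k$-component and its coefficients do not involve the variables $w^{j-1}_m$, the $\partial/\partial w^{j-1}_k$-components of $\mathrm{ad}_{Q^j}^{N}\left[Q^j,\bar H^j\right]$ are determined autonomously by the $\partial/\partial w^{j-1}_k$-component of $\left[Q^j,\bar H^j\right]$ alone; so you may simply project each $G_N$ onto those directions (for instance onto $\partial/\partial w^{j-1}_1$, where $\tilde \rho^{j}_1=-e^{w^j_0}$ leaves the single term $-c_2P_N(w^j_0)e^{w^j_0}$) and read off linear independence there, using $c_2\neq0$, which is assumed at this stage of the paper. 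No diagonal element $\hat G_1$ needs to be isolated.
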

\begin{proof}
Оператор $Q^j$ имеет вид 
\begin{align*}
Q^j=\sum_k \left(k_{0,1}+k_{1,1}w^{j}_{k}+k_{2,1}(w^{j}_{k})^2\right)\frac{\partial}{\partial w^{j}_{k}}.
\end{align*}
Рассмотрим операторы следующего вида 
\begin{align*}
\left[S^j_2,Q^j\right]=\sum_k \left(k_{1,1}+2k_{2,1}w^{j}_{k}\right)\frac{\partial}{\partial w^{j}_{k}}, \qquad 
\left[S^j_2,\left[S^j_2,Q^j\right]\right]=2k_{2,1}\sum_k\frac{\partial}{\partial w^{j}_{k}}.
\end{align*}
Поскольку эти операторы принадлежат алгебре $L_x$, то при выполнении условия $k_{2,1}\neq 0$ операторы 
$$\sum_k\frac{\partial}{\partial w^{j}_{k}}, \qquad \sum_k w^{j}_{k}\frac{\partial}{\partial w^{j}_{k}}, \qquad \sum_k (w^{j}_{k})^2\frac{\partial}{\partial w^{j}_{k}}$$
также лежат в $L_x$.
Рассмотрим последовательность операторов 
\begin{align*}
&P^j_0=\sum_k w^{j}_{k}\frac{\partial}{\partial w^{j}_{k}},\\
&P^j_1=\left[P^j_0,\bar{H}^j\right]=\sum_k c_2w^{j}_{k}\tilde{\rho}^{j}_{k}\frac{\partial}{\partial w^{j-1}_{k}}+\sum_k F_1(w^{j}_{k})\frac{\partial}{\partial w^{j}_{k}},\\
&P^j_2=\left[P^j_0,P^j_1\right]=\sum_k c_2\left(w^{j}_{k}\right)^2\tilde{\rho}^{j}_{k}\frac{\partial}{\partial w^{j-1}_{k}}+\sum_k F_2(w^{j}_{k})\frac{\partial}{\partial w^{j}_{k}},
\end{align*}
где $F_1(w^{j}_{k})$ и $F_2(w^{j}_{k})$ -- некоторые функции, явный вид которых мы не уточняем. Нетрудно проверить, что в общем случае оператор $P^j_m=\left[P^j_0,P^j_{m-1}\right]$ имеет вид 
\begin{align*}
P^j_m=\sum_k c_2\left(w^{j}_{k}\right)^m\tilde{\rho}^{j}_{k}\frac{\partial}{\partial w^{j-1}_{k}}+\sum_k F_m(w^{j}_{k})\frac{\partial}{\partial w^{j}_{k}}, \quad m\geq 1.
\end{align*}
Ясно, что линейная оболочка векторных полей $\sum_k \left(w^{j}_{k}\right)^m\tilde{\rho}^{j}_{k}\frac{\partial}{\partial w^{j-1}_{k}}$ являющихся проекциями операторов $P^j_m$ образует при $1\leq m <\infty$ бесконечномерное пространство. Следовательно, предположение о том, что $k_{2,1}\neq 0$ неверно. Поэтому  $k_{2,1}=0$, аналогично можно доказать, что $k_{1,1}=0$. В итоге получаем $q(w)=k_{0,1}$. Теорема доказана.
\end{proof}

Полагая $q(w)=\tilde A_4''(w)-\tilde A_4'(w)$ и воспользовавшись теоремами \ref{SQ} и \ref{SQH-1} приходим к дифференциальному уравнению на искомую функцию $\tilde A_4(w)$: $\tilde A_4''(w)-\tilde A_4'(w)=k_4$, решение которого имеет вид 
\begin{equation}\label{A4-1}
A_4(w)=k_{4,0}+k_{4,1}e^w-k_4w.
\end{equation}
Аналогично можно показать, что 
\begin{equation}\label{A6-1}
A_6(w)=k_{6,0}+k_{6,1}e^w-k_6w.
\end{equation}
Исследуем случай $2)$ из теоремы \ref{SQ}.

\begin{theorem}\label{SQH-2}
Предположим, что алгебра Ли-Райнхарта, порожденная операторами 
\begin{align*}
&S^j_2=\sum_k c_2\frac{\partial}{\partial w^{j-1}_{k}} +c_1\frac{\partial}{\partial w^{j}_{k}}, \qquad Q^j=\sum_k q(w^{j}_{k})\frac{\partial}{\partial w^{j}_{k}},\\
&\bar{H}^j=\sum_k c_2 \tilde \rho^{j}_k \frac{\partial}{\partial w^{j-1}_{k}}+\tilde A_6'(w^{j}_{k})\frac{\partial}{\partial w^{j}_{k}}+c_1\tilde \rho^{j}_k \frac{\partial}{\partial w^{j}_{k}} 
\end{align*}
имеет конечную размерность, а функция $q(w)$ имеет вид: $q(w)=k_{0,2}+k_{1,2}e^{\lambda w}+k_{2,2}e^{-\lambda w}$. Тогда $q(w)=k_{0,2}$ или $q(w)=k_{0,2}+k_{3}e^{-\frac{1}{m} w}$, где $m$ -- целое положительное число, $k_3$~-- некоторая постоянная, отличная от нуля.
\end{theorem}
\begin{proof}
Из очевидных равенств
\begin{align*}
&\left[S^j_2,Q^j\right]=\lambda k_{1,2}\sum_k e^{\lambda w^{j}_{k}}\frac{\partial}{\partial w^{j}_{k}}-\lambda k_{2,2}\sum_k e^{-\lambda w^{j}_{k}}\frac{\partial}{\partial w^{j}_{k}},\\
&\left[S^j_2,\left[S^j_2,Q^j\right]\right]=\lambda^2 k_{1,2}\sum_k e^{\lambda w^{j}_{k}}\frac{\partial}{\partial w^{j}_{k}}+\lambda^2 k_{2,2}\sum_k e^{-\lambda w^{j}_{k}}\frac{\partial}{\partial w^{j}_{k}}
\end{align*}
следует, что операторы 
\begin{align*}
Q_\lambda = k_{1,2}\sum_k e^{\lambda w^{j}_{k}}\frac{\partial}{\partial w^{j}_{k}} \quad \mbox{и} 
\quad Q_{-\lambda} = k_{2,2}\sum_k e^{-\lambda w^{j}_{k}}\frac{\partial}{\partial w^{j}_{k}}
\end{align*}
лежат в $L_x$. Рассмотрим последовательность операторов
$$Q_\lambda, \quad Q_1=\left[Q_\lambda,\bar{H}^j\right], \quad Q_2=\left[Q_\lambda,Q_1\right], \quad Q_3=\left[Q_\lambda,Q_2\right], \quad \ldots.$$
Для этих операторов легко можно найти явное представление:
\begin{align*}
&Q_1=k_{1,2}c_2\sum_k e^{\lambda w^{j}_{k}}\tilde{\rho}^j_k\frac{\partial}{\partial w^{j-1}_{k}}+\sum_k F_1(w^{j}_{k})\frac{\partial}{\partial w^{j}_{k}},\\
&Q_2=k^2_{1,2}c_2(\lambda+1)\sum_k e^{2\lambda w^{j}_{k}}\tilde{\rho}^j_k\frac{\partial}{\partial w^{j-1}_{k}}+\sum_k F_2(w^{j}_{k})\frac{\partial}{\partial w^{j}_{k}},\\
&\ldots\ldots\ldots,\\
&Q_m=k^m_{1,2}c_2(\lambda+1)(2\lambda+1)\ldots((m-1)\lambda+1)\sum_k e^{m\lambda w^{j}_{k}}\tilde{\rho}^j_k\frac{\partial}{\partial w^{j-1}_{k}}+\sum_k F_m(w^{j}_{k})\frac{\partial}{\partial w^{j}_{k}}.
\end{align*}
Для того, чтобы последовательность $Q_\lambda$, $Q_1$, $Q_2$, $\ldots$ имела конечную размерность необходимо, чтобы для некоторого целого $m_1>0$ выполнялось равенство $m_1\lambda+1=0$ или $k_{1,2}=0$. Заменяя $\lambda \mapsto -\lambda$ из этих же соображений находим, что для некоторого $m_2>0$ должно выполняться $m_2\lambda-1=0$ или $k_{2,2}=0$. В итоге получаем, что функция $q(w)$ может иметь одно из следующих представлений:
\begin{enumerate}
	\item[1)] $q(w)=k_{0,2}$,
	\item[2)] $q(w)=k_{0,2}+k_{1,2}e^{-\frac{1}{m_1}}w$,
	\item[3)] $q(w)=k_{0,2}+k_{2,2}e^{-\frac{1}{m_2}}w$,
\end{enumerate}
где $m_1$, $m_2$ -- целые положительные числа. Фактически должно выполняться одно из двух условий:
\begin{enumerate}
	\item[1)] $q(w)=k_{0,2}$,
	\item[2)] $q(w)=k_{0,2}+k_3e^{-\frac{1}{m}}w$,
\end{enumerate}
где $k_{0,2}$ и $k_3$ -- некоторые постоянные, а $m$~-- целое положительное число. Теорема \ref{SQH-2} доказана.
\end{proof}
Воспользуемся теоремой \ref{SQH-2} для уточнения функций $\tilde A_4(w)$ и $\tilde A_6(w)$. В силу сказанного выше имеем уравнения
\begin{align*}
\tilde A_4''(w)-\tilde A_4'(w)=k_{0,2}+k_3e^{-\frac{1}{m}}w, \qquad \tilde A_6''(w)-\tilde A_6'(w)=\bar{k}_{0,2}+\bar{k}_3e^{-\frac{1}{\bar{m}}}w.
\end{align*}
Их решения имеют вид:
\begin{align}
&\tilde A_4(w)=r_0+r_1e^w-k_{0,2}w+\frac{m^2}{m+1}k_3e^{-\frac{1}{m}}w, \label{A4-2} \\
&\tilde A_6(w)=\bar{r}_0+\bar{r}_1e^w-\bar{k}_{0,2}w+\frac{\bar{m}^2}{\bar{m}+1}\bar{k}_3e^{-\frac{1}{\bar{m}}}w. \label{A6-2}
\end{align}
В дальнейшем мы будем пользоваться представлениями \eqref{A4-2} и \eqref{A6-2} для функций $\tilde A_4(w)$ и $\tilde A_6(w)$, поскольку ранее найденные представления \eqref{A4-1} и \eqref{A6-1} являются частными случаями \eqref{A4-2} и \eqref{A6-2}.

\begin{theorem}
Если характеристическая алгебра $L_x$ имеет конечную размерность, то $c_1=0$ и $c_2=0$.
\end{theorem}
\begin{proof} Воспользуемся ранее введенным оператором $\bar W^j$, переписав его с учетом конкретного представления \eqref{A4-2} искомой функции $\tilde A_4(w)$:
$$\bar W^j=\sum_k \left(c_2 \tilde \rho^{j}_k+ r_1e^{w^{j}_{k}}- k_{0,2}-\frac{m}{m+1}
  k_3e^{-\frac{1}{m}w^{j}_{k}} \right) \frac{\partial}{\partial w^{j}_{k}} 
+c_1 \tilde \rho^{j}_{k+1} \frac{\partial}{\partial w^{j+1}_{k}}.$$
Вычислим коммутатор операторов $S^j_2\in L_x$ и $\bar Q^j\in L_x$, имея в виду уточненное представление оператора $\bar Q^j=\sum_k (k_{0,2}+
 k_3e^{-\frac{1}{m}w^{j}_{k}} ) \frac{\partial}{\partial w^{j}_{k}}$
$$[S^j_2, \bar Q^j]=-\sum_k \frac{c_1}{m}
 k_3e^{-\frac{1}{m}w^{j}_{k}}  \frac{\partial}{\partial w^{j}_{k}}\in L_x.$$
Но тогда очевидно имеем, что оператор $\sum_k k_{0,2} \frac{\partial}{\partial w^{j}_{k}}$, а также  оператор
\begin{align}\label{Wj}
W^j=\sum_k (c_2 \tilde \rho^{j}_k+r_1e^{w^{j}_{k}}) \frac{\partial}{\partial w^{j}_{k}} 
+c_1 \tilde \rho^{j}_{k+1} \frac{\partial}{\partial w^{j+1}_{k}}
\end{align}
тоже лежат в $L_x$.

Рассмотрим теперь последовательность операторов в $L_x$, заданную в виде
\begin{align}\label{seq0}
W^j,\quad W^{j-1}, \quad W^j_1=[W^{j-1},W^j], \quad W^j_2=[W^{j-1},W^j_1], \quad W^j_3=[W^{j-1},W^j_2], \quad \dots.
\end{align}
Выясним, как действует автоморфизм \eqref{auto-main} на элементы последовательности. Для этого нам понадобятся следующие простые соотношения
\begin{align*}
&D_nW^jD^{-1}_n=W^j+e^{w^{j}_{0}}S_2^{j+1}, \qquad [S_2^j, W^j]=c_1W^j,\\
&[S_2^j, W^{j-1}]=c_2W^{j-1}, \qquad
[S_2^j, W^j_m]=(mc_2+c_1)W^j_m \quad \mbox{при}\quad m\geq1,
\end{align*}
при помощи которых можно показать, что имеют место равенства
\begin{align*}
D_nW^j_1D^{-1}_n=&W^j_1+c_1e^{w^{j-1}_{0}}W^{j},\\
D_nW^j_2D^{-1}_n=&W^j_2+(2c_1+c_2)e^{w^{j-1}_{0}}W^{j}_1+c_1(r_1+c_1+c_2)e^{w^{j-1}_{0}}W^{j},\\
D_nW^j_3D^{-1}_n=&W^j_3+3(c_1+c_2)e^{w^{j-1}_{0}}W^{j}_2+(c_2(2c_1+c_2)+
(3c_1+c_2)(r_1+c_1+c_2))e^{2w^{j-1}_{0}}W^{j}_1\\
&+c_1(c_1+c_2+r_1)(c_1+2r_1+c_2)e^{3w^{j-1}_{0}}W^{j}.
\end{align*}
В общем случае имеем
\begin{align*}
D_nW^j_mD^{-1}_n=W^j_m+\left(mc_1+\frac{m(m-1)}{2}c_2\right)e^{w^{j-1}_{0}}W^{j}_{m-1}+\dots, \quad m\geq 2,
\end{align*}
где в многоточие отнесена линейная комбинация младших членов последовательности.

Поскольку  $L_x$ имеет конечную размерность, то существует такое целое число $M\geq 1$, что имеет место представление
\begin{align}\label{L1}
W^j_{M+1}=\lambda W^j_M+\dots.
\end{align}
Применим автоморфизм \eqref{auto-main} к обеим частям последнего равенства и упростим результат в силу формул выше. В итоге придем к соотношению
$$\lambda W^j_M+\left((M+1)c_1+\frac{M(M+1)}{2}c_2\right)e^{w^{j-1}_{0}}W^j_M=D_n(\lambda)W^j_M+\dots.$$
Сравнивая коэффициенты при операторе $W^j_M$ получим уравнение на $\lambda$
$$D_n(\lambda)-\lambda=\left((M+1)c_1+\frac{M(M+1)}{2}c_2\right)e^{w^{j-1}_{0}}.$$
Функция $\lambda$ может зависеть только от конечного числа динамических переменных, поэтому полученное равенство может выполняться лишь в том случае, когда правая часть уравнения равна нулю. Отсюда следует условие на параметры при $M\geq 1$
\begin{align}\label{R1}
c_1=-\frac{M}{2}c_2.
\end{align} 
Случай $M=0$, т.е. когда выполняется равенство 
\begin{align}\label{Wj1}
W^j_{1}=\lambda W^j+\nu S_2^{j+1},
\end{align}
следует рассмотреть отдельно. Под действием автоморфизма равенство \eqref{Wj1} преобразуется к виду
$$\lambda W^j+\nu S^{j+1}_2+c_1e^{w^{j-1}_{0}}W^j=D_n(\lambda)(W^j+e^{w^{j}_{0}}S^{j+1}_2)+D_n(\nu)   S^{j+1}_2.$$
Сравнивая коэффициенты при независимых операторах $W^j$ и $S^{j+1}_2$ получаем уравнения
$$D_n(\lambda)-\lambda=c_1e^{w^{j-1}_{0}}, \qquad D_n(\nu)-\nu=-D_n(\lambda)e^{w^{j}_{0}}$$
из которых следует, что 
\begin{align}\label{L2}
c_1=0, \quad \lambda=0, \quad \nu=\const.
\end{align}

Схему рассуждений, предложенную выше, можно применить еще раз, заменив оператор $W^j$ оператором
\begin{align}\label{Hj}
H^j=\sum_k c_2 \tilde \rho^{j+1}_k \frac{\partial}{\partial w^{j}_{k}} 
+\left(\bar{r}_1e^{w^{j+1}_{k}} +  c_1 \tilde \rho^{j+1}_k\right) \frac{\partial}{\partial w^{j+1}_{k}},
\end{align}
который также лежит в $L_x$, в чем нетрудно убедиться, используя те же соображения, что и в случае $W^j$, заменив $\bar W^j$ на $\bar H^j$  (см. формулу \eqref{WH}). В этом случае вместо \eqref{seq0} мы воспользуемся последовательностью следующего вида
\begin{align}\label{seq1}
H^{j+1},\quad H^{j}, \quad H^j_1=[H^{j+1},H^j], \quad H^j_2=[H^{j+1},H^j_1], \quad H^j_3=[H^{j+1},H^j_2], \quad \dots.
\end{align}
Опуская подробное изложение всей конструкции, мы приведем только выводы. Если последовательность обрывается на первом шаге, то получаем $c_2=0$; если на шаге с номером $N\geq2$, то приходим к равенству $$c_1=-\frac{2}{N}c_2.$$ Комбинируя эти условия с теми, что были получены ранее (см. \eqref{R1}, \eqref{L2}) сделаем вывод, что должна реализоваться одна из следующих возможностей: 
\begin{enumerate}
	\item[1)] $c_1=0, \quad c_2=0;$
	\item[2)] $N=1, \quad M=4,  \quad c_1=-2c_2;$
	\item[3)] $N=2, \quad M=2,  \quad c_1=-c_2;$
	\item[4)] $N=4, \quad M=1,  \quad c_1=-\frac{1}{2}c_2.$
\end{enumerate}
Если реализуется случай 1), то теорема доказана. Предположим, что имеет место случай 2).  

Поскольку $N=1$, то имеем представление 
\begin{align}\label{R3}
H^j_2=\lambda H^j_1+\mu H^j+\eta S_2^{j+1},
\end{align}
из которого, подействовав автоморфизмом \eqref{auto-main}, можно получить соотношение
\begin{align*}
&\lambda H^j_1+\mu H^j+\eta S_2^{j+1}+c_2(\bar{r}_1-c_2)e^{2w^{j+2}_0}H^j+c_2(\bar{r}_1-c_2)e^{2w^{j+2}_0+w^{j+1}_0}S_2^{j+1}=\\
&=D_n(\lambda)\left(H_1^j+c_2e^{w^{j+2}_0}H^j+c_2e^{w^{j+2}_0+w^{j+1}_0}S_2^{j+1}\right)+D_n(\mu)\left(H^j+e^{w^{j+1}_0}S_2^{j+1}\right)+D_n(\eta)S_2^{j+1}.
\end{align*}
Сравним коэффициенты при независимых операторах и получим уравнения:
\begin{enumerate}
	\item[1)] $D_n(\lambda)-\lambda=0$;
	\item[2)] $D_n(\mu)-\mu=c_2(\bar{r}_1-c_2)e^{2w^{j+2}_0}-D_n(\lambda) c_2e^{w^{j+2}_0}$;
	\item[3)] $D_n(\eta)-\eta=c_2(\bar{r}_1-c_2)e^{2w^{j+2}_0+w^{j+1}_0}-D_n(\lambda)c_2e^{w^{j+2}_0+w^{j+1}_0}-D_n(\mu)e^{w^{j+1}_0}$.
\end{enumerate}
Из 1) имеем $\lambda=\const$, из 2): $\lambda=0$, $\bar{r}_1=c_2$, в силу предположения, что $c_2\neq 0$. Из 3) получаем $\mu=0$, $\eta=\const$. С учетом найденных коэффициентов разложение \eqref{R3} принимает вид:
\begin{align*}
H^j_2=\eta S_2^{j+1}, \quad \eta=\const.
\end{align*}
Но это противоречит явным формулам, так как
\begin{align*}
H^j_2=\sum_kc_2^2\tilde \rho^{j+2}_k\tilde \rho^{j+1}_k\left(e^{w^{j+2}_k}-\tilde \rho^{j+2}_k\right) \frac{\partial}{\partial w^{j}_{k}},\qquad
S_2^{j+1}=c_2\sum_k\frac{\partial}{\partial w^{j}_{k}}+c_1\sum_k\frac{\partial}{\partial w^{j+1}_{k}},
\end{align*}
поэтому случай 2) не реализуется, если хотя бы одно из чисел $c_1$ и $c_2$ отлично от нуля.

Аналогично проверяется, что случай 4) также не реализуется. Остается проверить случай 3): $N=2$, $M=2$, $c_1=-c_2$. Поскольку мы  имеем равенство
\begin{align*}
D_nW^j_3D^{-1}_n=W^j_3-c_2(c_2+2r_1)e^{2w^{j-1}_{0}}W^{j}_1-2c_2r_1^2e^{3w^{j-1}_{0}}W^{j},
\end{align*}
то случаю 3) соответствует разложение 
\begin{align*}
W^j_3=\lambda W^j_2+\mu W^{j}_1+\eta W^{j}+\theta S_2^{j+1},
\end{align*}
из которого, в силу \eqref{auto-main}, получаем соотношение
\begin{align*}
&D_n(\lambda)\left(W_2^j-c_2e^{w_0^{j-1}}W^{j}_1-c_2r_1e^{2w^{j-1}_{0}}W^{j}\right)+D_n(\mu)\left(W^{j}_1-c_2e^{w^{j-1}_{0}}W^{j}\right)+\\
&+D_n(\eta)\left(W^j+e^{w^j_0}S_2^{j+1}\right)+D_n(\theta) S_2^{j+1}=\lambda W^j_2+\mu W^{j}_1+\eta W^{j}+\theta S_2^{j+1}-\\
&-c_2(c_2+2r_1)e^{2w^{j-1}_{0}}W^{j}_1-2c_2r_1^2e^{3w^{j-1}_{0}}W^{j}.
\end{align*}
Сравнивая коэффициенты при независимых операторах получим уравнения:
\begin{enumerate}
	\item[1)] $D_n(\lambda)-\lambda=0$;
	\item[2)] $D_n(\mu)-\mu=D_n(\lambda)c_2e^{w_0^{j-1}}-c_2(c_2+2r_1)e^{2w^{j-1}_{0}}$;
	\item[3)] $D_n(\eta)-\eta=D_n(\lambda)c_2r_1e^{2w^{j-1}_{0}}+D_n(\mu)c_2e^{w^{j-1}_{0}}-2c_2r_1^2e^{3w^{j-1}_{0}}$;
	\item[4)] $D_n(\theta)-\theta=-D_n(\eta)e^{w^j_0}$.
\end{enumerate}
Из этих уравнений находим:
\begin{enumerate}
	\item[1)] $\lambda=\const$;
	\item[2)] $\lambda=0$, $\mu=\const$, $c_2=-2r_1$;
	\item[3)] $\mu=0$,  $\eta=\const$, $r_1=0$;
	\item[4)] $\eta=0$, $\theta=\const$.
\end{enumerate}
Отсюда видно, что случай 3) реализуется только в том случае, когда $c_2=0$. Но тогда и $c_1=0$. Теорема доказана. 
\end{proof}

\subsection{Уточнение функций $A_3$, $A_4$ и $A_6$}

В предыдущих разделах было показано, что часть функций в \eqref{form-f} равна нулю, в результате чего $f$ принимает вид:
\begin{align}\label{form-f2}
f\left(u^1_n, u^0_n, u^{0}_{n+1}, u^{-1}_{n+1}\right)=&A_3(\tau)u^1_nu^{-1}_{n+1}+A_4(\tau)u^1_n+A_5(\tau)u^0_n+A_6(\tau)u^{-1}_{n+1}+A_7(\tau),
\end{align}
где $\tau=u^0_n-u^{0}_{n+1}$, причем $A_3(\tau)$, $A_4(\tau)$, $A_6(\tau)$ остаются неизвестными функциями, в то время как функции $A_5(\tau)=c_5\tau$ и $A_7(\tau)=-\frac{c_5}{2}\tau^2+c_7\tau$ уже определены.

При этом подалгебра $L'_x$ характеристической алгебры $L_x$, введенная в предыдущем разделе, порождается операторами:
\begin{align}\label{eqS1S3}
\begin{aligned}
S^j_1=&\sum_k \tilde A_3(w^{j}_{k})\frac{\partial}{\partial w^{j}_{k}}, \quad -N_2\leq j\leq N_1,\\
S_3^j=&\sum_k c_5 \frac{\partial}{\partial w^{j}_{k}}+\tilde A_4(w^{j-1}_{k})\frac{\partial}{\partial w^{j-1}_{k}}+\tilde A_3(w^{j-1}_{k})\tilde \rho^{j-2}_{k+1}\frac{\partial}{\partial w^{j-1}_{k}}+\\
&+\tilde A_6(w^{j+1}_{k})\frac{\partial}{\partial w^{j+1}_{k}}+ \tilde A_3(w^{j+1}_{k})\tilde \rho^{j+2}_k \frac{\partial}{\partial w^{j+1}_{k}}.
\end{aligned}
\end{align}
Рассмотрим последовательность, порожденную кратными коммутаторами операторов $S^j_3$ и $S^{j+1}_1$:
\begin{align}\label{seqR}
R_1=\left[S^j_3,S^{j+1}_1\right], \quad R_2=\left[S^j_3,R_1\right], \quad R_3=\left[S^j_3,R_2\right], \quad \ldots.
\end{align}
Легко проверить, что 
\begin{align*}
&R_1=\sum_kc_5\tilde A'_3(w^{j+1}_{k})\frac{\partial}{\partial w^{j+1}_{k}}, \\
&R_2=\sum_kc_5\tilde A''_3(w^{j+1}_{k})\frac{\partial}{\partial w^{j+1}_{k}}, \\
&\ldots\ldots,\\
&R_m=\sum_kc_5\tilde A^{(m)}_3(w^{j+1}_{k})\frac{\partial}{\partial w^{j+1}_{k}}.
\end{align*}
В силу конечномерности характеристической алгебры $L_x$, для некоторого $M$, оператор $R_{M+1}$ должен линейно выражаться через операторы $S_1^{j+1}$, $R_1$, $R_2$, \ldots, $R_M$, которые являются линейно независимыми:
\begin{align}\label{P1}
R_{M+1}=\lambda_MR_M+\lambda_{M-1}R_{M-1}+\cdots+\lambda_1R_1+\lambda_0S^{j+1}_1.
\end{align}
Отметим, что автоморфизм \eqref{auto-main} переводит операторы последовательности \eqref{seqR} в себя. Применим автоморфизм к \eqref{P1} и получим
\begin{align}\label{P2}
R_{M+1}=D_n(\lambda_M)R_M+D_n(\lambda_{M-1})R_{M-1}+\cdots+D_n(\lambda_1)R_1+D_n(\lambda_0)S^{j+1}_1.
\end{align}
Отсюда ясно, что коэффициенты $\lambda_j$ -- постоянны. Переходя к координатным представлениям операторов легко показать, что $\tilde A_3(w)$ является квазиполиномом, который можно уточнить, пользуясь Леммой 6 работы \cite{HPZh2008} и доказать, что функция $\tilde A_3(w)$ имеет одну из следующих двух форм:
\begin{align}
	&1) \quad \tilde A_3(w)=c_{3,0}+c_{3,1}w+c_{3,2}w^2, \label{F1}\\
	&2) \quad \tilde A_3(w)=c_{3,0}+c_{3,1}e^{\lambda w}+c_{3,2}e^{-\lambda w}, \label{F2}
\end{align}
где $c_{i,j}$ -- некоторые постоянные.

Рассмотрим сначала более подробно  случай $1)$. 
\begin{lemma}\label{lem3.1}
Если $\tilde A_3(w)$ имеет вид \eqref{F1}, то справедливо равенство $\tilde A_3(w)\equiv0$.
\end{lemma}
\begin{proof}
С учетом формулы \eqref{F1} оператор $S^j_1$ представляется в виде:
\begin{align*}
S^j_1=&\sum_k \left(c_{3,0}+c_{3,1}w^{j}_{k}+c_{3,2}(w^{j}_{k})^2\right)\frac{\partial}{\partial w^{j}_{k}}.
\end{align*}
Легко можно показать, взяв кратные коммутаторы операторов $S^j_3$ и $S^j_1$, что при выполнении условия $c_{3,2}\neq 0$, операторы $\sum_k\frac{\partial}{\partial w^{j}_{k}}$, $\sum_kw^{j}_{k}\frac{\partial}{\partial w^{j}_{k}}$ и $\sum_k(w^{j}_{k})^2\frac{\partial}{\partial w^{j}_{k}}$ лежат в $L_x$.

Рассмотрим последовательность операторов, заданных в виде
\begin{align*}
Q^{j+3}, \quad P^j, \quad P^j_1=\left[Q^{j+3}, P^j\right], \quad P^j_2=\left[Q^{j+3}, P^j_1\right], \quad \ldots,
\end{align*}
где
\begin{align*}
&Q^{j+3}=\sum_kw^{j+3}_{k}\frac{\partial}{\partial w^{j+3}_{k}},\\
&P^j=ad^2_{S^{j+1}_3}S^{j}_3-c_{3,2}\sum_k\frac{\partial}{\partial w^{j+2}_{k}}=\sum_k\tilde A''_6(w^{j+2}_{k})\frac{\partial}{\partial w^{j+2}_{k}}+c_{3,2}\sum_k\tilde \rho^{j+3}_k \frac{\partial}{\partial w^{j+2}_{k}}, \\
&P^j_1=c_{3,2}\sum_kw^{j+3}_{k}\tilde \rho^{j+3}_k \frac{\partial}{\partial w^{j+2}_{k}},\\
&P^j_2=P^j_1+c_{3,2}\sum_k(w^{j+3}_{k})^2\tilde \rho^{j+3}_k \frac{\partial}{\partial w^{j+2}_{k}},\\
&P^j_3=3P^j_2-2P^j_1+c_{3,2}\sum_k(w^{j+3}_{k})^3\tilde \rho^{j+3}_k \frac{\partial}{\partial w^{j+2}_{k}}.
\end{align*}
По индукции можно доказать, что оператор вида $c_{3,2}\sum_k(w^{j+3}_{k})^m\tilde \rho^{j+3}_k \frac{\partial}{\partial w^{j+2}_{k}}$ лежит в $L_x$. Если $c_{3,2}\neq 0$, то линейная оболочка всех таких операторов имеет бесконечную размерность, что противоречит условию конечномерности алгебры $L_x$. Следовательно, $c_{3,2}=0$. Аналогично можно проверить, что $c_{3,1}=0$.

Докажем методом от противного, что $c_{3,0}=0$. Пусть $c_{3,0}\neq 0$, тогда операторы
\begin{align}\label{++}
R^j=\left[S^{j+1}_1,S^{j-1}_3\right]=\sum_kc_{3,0}\rho^{j+1}_k \frac{\partial}{\partial w^{j}_{k}},\qquad
\bar{R}^j=\left[S^{j-1}_1,S^{j+1}_3\right]=\sum_kc_{3,0}\rho^{j-1}_{k+1} \frac{\partial}{\partial w^{j}_{k}}
\end{align}
принадлежат $L_x$. Рассмотрим последовательность их кратных коммутаторов:
\begin{align*}
&R^j_1=\left[R^{j-1},\bar{R}^j\right]=\sum_kc_{3,0}^2\rho^{j}_k\rho^{j-1}_{k+1} \frac{\partial}{\partial w^{j}_{k}}-\sum_kc_{3,0}^2\rho^{j}_k\rho^{j-1}_{k+1} \frac{\partial}{\partial w^{j-1}_{k}},\\
&R^j_2=\left[R^{j-1},R^j_1\right]=\sum_kc_{3,0}^3(\rho^{j}_k)^2\rho^{j-1}_{k+1} \frac{\partial}{\partial w^{j}_{k}}-2\sum_kc_{3,0}^3(\rho^{j}_k)^2\rho^{j-1}_{k+1} \frac{\partial}{\partial w^{j-1}_{k}},
\end{align*}
для $m\geq 3$ имеем 
\begin{align*}
R^j_m=\left[R^{j-1},R^j_{m-1}\right]=\sum_kc_{3,0}^{m+1}(\rho^{j}_k)^m\rho^{j-1}_{k+1} \frac{\partial}{\partial w^{j}_{k}}-m\sum_kc_{3,0}^{m+1}(\rho^{j}_k)^m\rho^{j-1}_{k+1} \frac{\partial}{\partial w^{j-1}_{k}}.
\end{align*}
Если $c_{3,0}\neq 0$, то как легко показать, пользуясь формулой Вандермонда, что эта последовательность бесконечномерна, поэтому $c_{3,0}=0$. Лемма доказана.
\end{proof}

Согласно Лемме \ref{lem3.1} подалгебра $L'_x$ порождена семейством операторов
\begin{align*}
S_3^j=\sum_k c_5 \frac{\partial}{\partial w^{j}_{k}}+\tilde A_4(w^{j-1}_{k})\frac{\partial}{\partial w^{j-1}_{k}}+\tilde A_6(w^{j+1}_{k})\frac{\partial}{\partial w^{j+1}_{k}}, \quad -N_2\leq j\leq N_1.
\end{align*}
Возьмем операторы $S^j_3$ и $S^{j+1}_3$ и рассмотрим последовательность их кратных коммутаторов:
\begin{align*}
S_3^j, \quad S^{j+1}_3, \quad R_1=\left[S_3^j,S^{j+1}_3\right], \quad R_2=\left[S_3^j,R_1\right], \quad \ldots.
\end{align*}
Они имеют следующее явное представление:
\begin{align*}
&R_1=\sum_k c_5 \tilde A'_4(w^{j}_{k})\frac{\partial}{\partial w^{j}_{k}}-\sum_k c_5\tilde A'_6(w^{j+1}_{k})\frac{\partial}{\partial w^{j+1}_{k}},\\
&R_2=\sum_k c^2_5 \tilde A''_4(w^{j}_{k})\frac{\partial}{\partial w^{j}_{k}}+\sum_k F_2(w^{j+1}_{k})\frac{\partial}{\partial w^{j+1}_{k}},\\
&\ldots\ldots\ldots,\\
&R_m=\sum_k c^m_5 \tilde A^{m}_4(w^{j}_{k})\frac{\partial}{\partial w^{j}_{k}}+\sum_k F_m(w^{j+1}_{k})\frac{\partial}{\partial w^{j+1}_{k}}.
\end{align*}
Отсюда можно вывести, что функция $\tilde A_4(w)$ является квазимногочленом, для определения структуры которого можно воспользоваться схемой, изложенной в \cite{HPZh2008} (см. Лемму 6), и показать, что $\tilde A_4(w)$ принадлежит одному из классов:
\begin{align}
	&1) \quad \tilde A_4(w)=c_{4,0}+c_{4,1}w+c_{4,2}w^2, \label{F41}\\
	&2) \quad \tilde A_4(w)=c_{4,0}+c_{4,1}e^{\lambda w}+c_{4,2}e^{-\lambda w}. \label{F42}
\end{align}
Аналогично, рассматривая последовательность операторов
\begin{align*}
S_3^{j+1}, \quad S^{j}_3, \quad \bar{R}_1=\left[S^{j+1}_3,S_3^j\right], \quad \bar{R}_2=\left[S^{j+1}_3,\bar{R}_1\right], \quad \ldots
\end{align*}
можно убедиться, что функция $\tilde A_6(w)$ имеет одну из форм
\begin{align}
	&1) \quad \tilde A_6(w)=c_{6,0}+c_{6,1}w+c_{6,2}w^2, \label{F61}\\
	&2) \quad \tilde A_6(w)=c_{6,0}+c_{6,1}e^{\lambda w}+c_{6,2}e^{-\lambda w}. \label{F62}
\end{align}
Наметим схему доказательства утверждения, что функции $\tilde A_4(w)$ и $\tilde A_6(w)$ либо одновременно полиномиальны, либо экспоненциальны. Предположим, что $\tilde A_4(w)$ имеет вид \eqref{F41}, а $\tilde A_6(w)$ -- вид \eqref{F62}. Воспользуемся операторами:
\begin{align*}
S_3^j=\sum_k c_5 \frac{\partial}{\partial w^{j}_{k}}+\tilde A_4(w^{j-1}_{k})\frac{\partial}{\partial w^{j-1}_{k}}+\tilde A_6(w^{j+1}_{k})\frac{\partial}{\partial w^{j+1}_{k}},
\end{align*}
с верхними индексами $j$ и $j+2$. Легко видеть, что 
\begin{align*}
Q=\left[S_3^{j+2},S_3^j\right]=\sum_k\left(\tilde A_4(w^{j+1}_{k})\tilde A'_6(w^{j+1}_{k})-\tilde A'_4(w^{j+1}_{k})\tilde A_6(w^{j+1}_{k})\right)\frac{\partial}{\partial w^{j+1}_{k}}.
\end{align*}
Пусть задана последовательность
\begin{align*}
Q, \quad Q_1=\left[S_3^{j+2},Q\right], \quad Q_2=\left[S_3^{j+2},Q_1\right], \quad \ldots.
\end{align*}
Можно проверить, что коэффициент при $\frac{\partial}{\partial w^{j+1}_{k}}$  в операторе $Q_m$ имеет вид
 $P_m(w^{j+1}_{k})e^{\lambda w^{j+1}_{k}}$, где $P_m(w^{j+1}_{k})$~-- полином, степень $r_m$ которого
неограниченно возрастает по $m$. Поэтому эти операторы являются линейно независимыми. Нетрудно показать, что в \eqref{F42} и \eqref{F62} параметр $\lambda$ имеет одно и то же значение.

Теперь рассмотрим случай 2).
\begin{lemma}
Если функция $\tilde A_3(w)$ имеет вид \eqref{F2}, то 
\begin{align}\label{F32}
\tilde A_3(w)=c_{3}e^{-\frac{1}{m} w},
\end{align}
где $c_{3}$ совпадает с одной из двух констант $c_{3,1}$ и $c_{3,2}$, $m>0$~-- целое число.
\end{lemma}
\begin{proof}
 Построим новые операторы 
\begin{align*}
&Q_\lambda=\frac{1}{2\lambda^2}\left(\lambda ad_{S_3^{j}}S_1^{j}+ad^2_{S_3^{j}}S_1^{j}\right)=\sum_kc_{3,1}e^{\lambda w^{j}_{k}}\frac{\partial}{\partial w^{j}_{k}},\\
&Q_{-\lambda}=-\frac{1}{2\lambda^2}\left(\lambda ad_{S_3^{j}}S_1^{j}-ad^2_{S_3^{j}}S_1^{j}\right)=\sum_kc_{3,2}e^{-\lambda w^{j}_{k}}\frac{\partial}{\partial w^{j}_{k}},
\end{align*}
где $S_1^{j}$ и $S_3^{j}$ имеют вид \eqref{eqS1S3}. Возьмем последовательность операторов вида:
\begin{align*}
&ad_{Q_\lambda}S_3^{j-2}=c_{3,1}\sum_k\tilde A_3(w^{j-1}_{k})\tilde{\rho}^j_ke^{\lambda w^{j}_{k}}\frac{\partial}{\partial w^{j-1}_{k}},\\
&ad^2_{Q_\lambda}S_3^{j-2}=c^2_{3,1}(\lambda+1)\sum_k\tilde A_3(w^{j-1}_{k})\tilde{\rho}^j_ke^{2\lambda w^{j}_{k}}\frac{\partial}{\partial w^{j-1}_{k}},\\
&ad^3_{Q_\lambda}S_3^{j-2}=c^3_{3,1}(\lambda+1)(2\lambda+1)\sum_k\tilde A_3(w^{j-1}_{k})\tilde{\rho}^j_ke^{3\lambda w^{j}_{k}}\frac{\partial}{\partial w^{j-1}_{k}},\\
&\ldots\ldots\ldots,\\
&ad^m_{Q_\lambda}S_3^{j-2}=c^m_{3,1}(\lambda+1)(2\lambda+1)\ldots((m-1)\lambda+1)\sum_k\tilde A_3(w^{j-1}_{k})\tilde{\rho}^j_ke^{m\lambda w^{j}_{k}}\frac{\partial}{\partial w^{j-1}_{k}}.
\end{align*}
Поскольку $\lambda\neq 0$, то эта последовательность может оборваться только за счет зануления сомножителя $c_{3,1}(m\lambda+1)$. Поэтому если $c_{3,1}\neq 0$, то $\lambda=-\frac{1}{m}$, где $m$~-- целое положительное число.

Аналогично, рассматривая кратные коммутаторы операторов $Q_{-\lambda}$ и $S_3^{j-2}$ мы приходим к условию $c_{3,2}(\bar{m}\lambda-1)=0$, из которого в силу $c_{3,2}\neq 0$ находим $\lambda=\frac{1}{\bar{m}}$, где $\bar{m}>0$~-- целое число. Сравнивая найденные условия приходим к трем возможным вариантам:
\begin{align*}
1) \quad c_{3,1}=0, \quad c_{3,2}=0; \qquad
2) \quad c_{3,1}=0, \quad \lambda=\frac{1}{\bar{m}}; \qquad
3) \quad c_{3,2}=0, \quad \lambda=-\frac{1}{m}.
\end{align*}
Из первого варианта следует, что $\tilde A_3(w)=c_{3,0}$. Этот случай был рассмотрен ранее (см. Лемму~\ref{lem3.1}), где было показано, что тогда $\tilde A_3(w)=0$. Объединим случаи 2) и 3), представив $\tilde A_3(w)$ в виде 
\begin{align*}
\tilde A_3(w)=c_{3,0}+c_{3}e^{-\frac{1}{m} w},
\end{align*}
Покажем, что константа $c_{3,0}$ равна нулю. Для этого, предполагая, что $c_{3,0}\neq 0$ введем в рассмотрение несколько операторов из $L'_x$:
\begin{align*}
&R^j=S_1^{j}-\frac{m}{c_5}\left[S_3^{j},S_1^{j}\right]=\sum_kc_{3,0}\frac{\partial}{\partial w^{j}_{k}} \in L'_x, \\
&P^j=m\left[R^{j+1},\left[R^{j+2},S_3^{j}\right]\right]+c_{3,0}\left[R^{j+2},S_3^{j}\right]=c_{3,0}^3\sum_k\tilde{\rho}^{j+2}_k\frac{\partial}{\partial w^{j+1}_{k}},\\
&Q^j=m\left[R^{j-1},\left[R^{j-2},S_3^{j}\right]\right]+c_{3,0}\left[R^{j-2},S_3^{j}\right]=c_{3,0}^3\sum_k\tilde{\rho}^{j-2}_{k+1}\frac{\partial}{\partial w^{j-1}_{k}}.
\end{align*}
Пара операторов $P^{j-1}$, $Q^{j+1}$ в точности совпадает с парой операторов $R^j$, $\bar{R}^j$ (см. \eqref{++}), рассмотренных при доказательстве Леммы~\ref{lem3.1}. Там же было доказано, что последовательность кратных коммутаторов этих операторов линейно независима, если $c_{3,0}\neq0$, что противоречит условию конечномерности алгебры $L_x$. Полученное противоречие убеждает в том, что $c_{3,0}=0$. Лемма доказана.
\end{proof}

Далее мы ищем функции $\tilde A_4(w)$ и $\tilde A_6(w)$, в том случае, когда $\tilde A_3(w)$ имеет вид \eqref{F32}.
Перейдем к новым переменным, полагая 
\begin{align*}
w^{j}_{k}=m\log y^j_k, 
\end{align*}
тогда операторы $S_1^{j}$ и $S_3^{j}$ принимают вид
\begin{align*}
S^j_1=&\sum_k \frac{c_3}{m}\frac{\partial}{\partial y^{j}_{k}},\\
S_3^j=&\sum_k \frac{c_5}{m} y^{j}_{k}\frac{\partial}{\partial y^{j}_{k}}+\bar A_4(y^{j-1}_{k})\frac{\partial}{\partial y^{j-1}_{k}}+\frac{c_3}{m}\bar \rho^{j-2}_{k+1}\frac{\partial}{\partial y^{j-1}_{k}}+\bar A_6(y^{j+1}_{k})\frac{\partial}{\partial y^{j+1}_{k}}+ \frac{c_3}{m}\bar \rho^{j+2}_k \frac{\partial}{\partial y^{j+1}_{k}},
\end{align*}
где $\bar A_i(y)=\frac{y}{m}\tilde{A}_i(m\log y)$, функция $\bar{\rho}^j_i$ получается из \eqref{Trho} заменой $e^{w^{j}_{k}}$ на $(y^j_k)^m$.
Рассматривая две последовательности кратных коммутаторов операторов $S_1^{j}$ и $S_3^{j}$ вида $ad^{s}_{S^j_1}S_3^{j+1}$ и $ad^{s}_{S^j_1}S_3^{j-1}$, где $s\geq1$, можно показать, что пара функций $(\bar A_4(y),\,\bar A_6(y))$ может совпадать лишь с одним из следующих вариантов (1a,\, 1b) либо (2a,\, 2b), где:
\begin{align*}
	&1a) \quad \bar A_4(y)=\bar{c}_{4,0}+\bar{c}_{4,1}y+\bar{c}_{4,2}y^2,\\
	&2a) \quad \bar A_4(y)=\bar{c}_{4,0}+\bar{c}_{4,1}e^{\lambda y}+\bar{c}_{4,2}e^{-\lambda y}, \quad \lambda\neq0\\
	&1b) \quad \bar A_6(y)=\bar{c}_{6,0}+\bar{c}_{6,1}y+\bar{c}_{6,2}y^2,\\
	&2b) \quad \bar A_6(y)=\bar{c}_{6,0}+\bar{c}_{6,1}e^{\lambda y}+\bar{c}_{6,2}e^{-\lambda y}, \quad \lambda\neq0,
\end{align*}
где $\bar{c}_{i,j}$ -- некоторые постоянные.
В случае (2a,\, 2b) параметры $\bar{c}_{4,1}$, $\bar{c}_{4,2}$, $\bar{c}_{6,1}$ и $\bar{c}_{6,2}$ равны нулю. Действительно, воспользовавшись операторами $S_1^{j}$ и $S_3^{j+1}$ можно показать, что операторы вида $Q_\lambda=\sum_k\bar{c}_{4,1}e^{\lambda y^{j}_{k}}\frac{\partial}{\partial y^{j}_{k}}$ и $Q_{-\lambda}=\sum_k\bar{c}_{4,2}e^{-\lambda y^{j}_{k}}\frac{\partial}{\partial y^{j}_{k}}$ лежат в алгебре $L_x$. Исследуем последовательность кратных коммутаторов $S_3^{j}$ и $Q_\lambda$ следующего вида
\begin{align*}
Q_1=\left[S_3^{j},Q_\lambda\right], \quad Q_2=\left[S_3^{j},Q_1\right], \quad Q_3=\left[S_3^{j},Q_2\right], \quad \ldots.
\end{align*}
Легко проверить, что выполняются соотношения:
\begin{align*}
Q_1=\bar{c}_{4,1}\frac{c_5}{m}\lambda\sum_k y^{j}_{k}e^{\lambda y^{j}_{k}}\frac{\partial}{\partial y^{j}_{k}}-\frac{c_5}{m}Q_\lambda, \quad
Q_2=\bar{c}_{4,1}\left(\frac{c_5}{m}\right)^2(\lambda)^2\sum_k (y^{j}_{k})^2e^{\lambda y^{j}_{k}}\frac{\partial}{\partial y^{j}_{k}}-\frac{c_5}{m}Q_1, \quad \ldots.
\end{align*}
По индукции можно показать, что в общем случае справедливо равенство
\begin{align*}
Q_M=\bar{c}_{4,1}\left(\frac{c_5}{m}\right)^M(\lambda)^M\sum_k (y^{j}_{k})^Me^{\lambda y^{j}_{k}}\frac{\partial}{\partial y^{j}_{k}}+\sum_{i=0}^{M-1}r_{M,i}Q_i,
\end{align*}
где $Q_0:=Q_\lambda$, а сомножители $r_{M,i}$~-- некоторые постоянные. Поскольку последовательность операторов
\begin{align*}
\bar{Q}_M=\bar{c}_{4,1}\left(\frac{c_5}{m}\right)^M(\lambda)^M\sum_k (y^{j}_{k})^Me^{\lambda y^{j}_{k}}\frac{\partial}{\partial y^{j}_{k}},
\end{align*}
лежащих в $L_x$, имеет бесконечномерную размерность, то приходим к противоречию, т.к. $\dim L_x<\infty$. Следовательно, наше предположение о том, что $\bar{c}_{4,1}\neq 0$ неверно. 
Поэтому $\bar{c}_{4,1}=0$. 

Аналогично можно показать, что $\bar{c}_{6,1}=0$, $\bar{c}_{4,2}=0$, $\bar{c}_{6,2}=0$. Таким образом, в случае (2a,\, 2b) имеем $\bar A_4(y)=\bar{c}_{4,0}$ и $\bar A_6(y)=\bar{c}_{6,0}$. 

Рассмотрим теперь случай (1a,\, 1b). Вернемся к переменным $w$, полагая $y=e^{\frac{1}{m}w}$. Тогда найденные функции $\bar A_4(y)$, $\bar A_6(y)$ перепишутся в виде
\begin{align*}
	&1a) \quad \tilde{A}_4(w)=c_{4,0}+c_{4,1}e^{-\frac{1}{m}w}+c_{4,2}e^{\frac{1}{m}w},\\
	&1b) \quad \tilde{A}_6(w)=c_{6,0}+c_{6,1}e^{-\frac{1}{m}w}+c_{6,2}e^{\frac{1}{m}w},
\end{align*}
а операторы $S_1^{j}$ и $S_3^{j}$ в терминах $w$ примут вид
\begin{align*}
S^j_1=&\sum_k c_3e^{-\frac{1}{m} w^{j}_{k}}\frac{\partial}{\partial w^{j}_{k}},\\
S_3^j=&\sum_k c_5 \frac{\partial}{\partial w^{j}_{k}}+\tilde A_4(w^{j-1}_{k})\frac{\partial}{\partial w^{j-1}_{k}}+c_3e^{-\frac{1}{m} w^{j-1}_{k}}\tilde \rho^{j-2}_{k+1}\frac{\partial}{\partial w^{j-1}_{k}}+\\
&+\tilde A_6(w^{j+1}_{k})\frac{\partial}{\partial w^{j+1}_{k}}+ c_3e^{-\frac{1}{m} w^{j+1}_{k}}\tilde \rho^{j+2}_k \frac{\partial}{\partial w^{j+1}_{k}}.
\end{align*}
Покажем, что $c_{4,2}=0$. Легко убедиться, что
\begin{align*}
\left[S^{j-1}_1,S^j_3\right]=\frac{c_3}{m}\sum_k\left(2c_{4,2}+c_{4,0}e^{-\frac{1}{m} w^{j-1}_{k}}\right)\frac{\partial}{\partial w^{j-1}_{k}}.
\end{align*}
Отсюда ясно, что при $c_{4,2}\neq 0$ оператор $S^j_0=\sum_k\frac{\partial}{\partial w^{j}_{k}}$ лежит в алгебре $L_x$. Легко можно проверить, что выполняются равенства:
\begin{align*}
&P^j=\left[S^{j-1}_0,S^{j+1}_3\right]=c_3\sum_ke^{-\frac{1}{m} w^{j}_{k}}\tilde \rho^{j-1}_{k+1}\frac{\partial}{\partial w^{j}_{k}},\\
&R^j=\left[S^{j+1}_0,S^{j-1}_3\right]=c_3\sum_ke^{-\frac{1}{m} w^{j}_{k}}\tilde \rho^{j+1}_{k}\frac{\partial}{\partial w^{j}_{k}}.
\end{align*}
Вычитая из $S_3^j$ линейную комбинацию полученных операторов с подходящими значениями $j$ получим, что оператор
\begin{align*}
Q^j=\sum_k c_5 \frac{\partial}{\partial w^{j}_{k}}+\tilde A_4(w^{j-1}_{k})\frac{\partial}{\partial w^{j-1}_{k}}+\tilde A_6(w^{j+1}_{k})\frac{\partial}{\partial w^{j+1}_{k}}
\end{align*}
также лежит в алгебре $L_x$. Построим новый оператор
\begin{align*}
Q_\lambda=\frac{m}{2}\left[S^{j}_0,Q^{j+1}\right]+\frac{m^2}{2}\left[S^{j}_0,\left[S^{j}_0,Q^{j+1}\right]\right]=\sum_kc_{4,2}e^{\frac{1}{m} w^{j}_{k}}\frac{\partial}{\partial w^{j}_{k}}.
\end{align*}
Рассмотрим последовательность кратных коммутаторов вида:
\begin{align*}
Q_\lambda, \quad R^{j-1}, \quad R_1=\left[Q_\lambda,R^{j-1}\right], \quad R_2=\left[Q_\lambda,R_1\right], \quad \ldots.
\end{align*}
Найдем явные представления для этих операторов
\begin{align*}
&R_1=c_3c_{4,2}\sum_ke^{\frac{1}{m} w^{j}_{k}}e^{-\frac{1}{m} w^{j-1}_{k}}\tilde{\rho}^j_k\frac{\partial}{\partial w^{j-1}_{k}}, \\
&R_2=c_3c^2_{4,2}\left(\frac{1}{m}+1\right)\sum_ke^{\frac{2}{m} w^{j}_{k}}e^{-\frac{1}{m} w^{j-1}_{k}}\tilde{\rho}^j_k\frac{\partial}{\partial w^{j-1}_{k}}.
\end{align*}
В общем случае имеем
\begin{align*}
R_M=c_3c^M_{4,2}\left(\frac{1}{m}+1\right)\left(\frac{2}{m}+1\right)\ldots\left(\frac{M-1}{m}+1\right)\sum_ke^{\frac{M}{m} w^{j}_{k}}e^{-\frac{1}{m} w^{j-1}_{k}}\tilde{\rho}^j_k\frac{\partial}{\partial w^{j-1}_{k}}.
\end{align*}
Поскольку $m\geq 1$, то линейная оболочка элементов этой последовательности имеет бесконечную размерность, если $c_{4,2}\neq0$, поэтому $c_{4,2}=0$. Аналогично можно показать, что $c_{6,2}=0$. Следовательно, в случае (1a,\, 1b) функции $\tilde{A}_4(w)$ и $\tilde{A}_6(w)$ имеют вид:
\begin{align*}
	&\tilde{A}_4(w)=c_{4,0}+c_{4,1}e^{-\frac{1}{m}w},\\
	&\tilde{A}_6(w)=c_{6,0}+c_{6,1}e^{-\frac{1}{m}w}.
\end{align*}

Подведем итог рассуждениям выше, сформулируем его в виде теоремы. 

\begin{theorem}\label{final}
Если цепочка \eqref{jnx-part}, \eqref{form-f} интегрируема в смысле определения~\ref{def}, то функция \eqref{form-f}
\begin{align}\label{2*}
f=&A_3(\tau)u^{j+1}_nu^{j-1}_{n+1}+A_4(\tau)u^{j+1}_n+A_5(\tau)u^j_n+A_6(\tau)u^{j-1}_{n+1}+A_7(\tau)
\end{align}
принадлежит одному из следующих трех классов
\begin{align*}
1)\, &A_3=0, \quad A_4=c_{4,0}\tau+c_{4,1}\tau \log \tau+c_{4,2}\tau (\log \tau)^2, \quad A_6=c_{6,0}\tau+c_{6,1}\tau \log \tau+c_{6,2}\tau (\log \tau)^2,\\
2)\, &A_3=0, \quad A_4=c_{4,0}\tau+c_{4,1}\tau^{1+\lambda}+c_{4,2}\tau^{1-\lambda}, \quad A_6=c_{6,0}\tau+c_{6,1}\tau^{1+\lambda}+c_{6,2}\tau^{1-\lambda},\\
3)\, &A_3=c_3\tau^{1-\frac{1}{m}}, \quad A_4=c_{4,0}\tau+c_{4,1}\tau^{1-\frac{1}{m}}, \quad A_6=c_{6,0}\tau+c_{6,1}\tau^{1-\frac{1}{m}}.
\end{align*}
где $\tau=u^0_n-u^{0}_{n+1}$, коэффициенты $A_5$, $A_7$ имеют вид $A_5(\tau)=c_5\tau$, $A_7(\tau)=-\frac{c_5}{2}\tau^2+c_7\tau$. 
\end{theorem}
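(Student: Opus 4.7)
The plan is to collect the case analyses performed in the preceding subsections and translate them from the logarithmic variable $w$ (with $\tau^j_k = e^{w^j_k}$) back to $\tau = u^0_n - u^0_{n+1}$; no further ring-theoretic work is needed beyond what has already been carried out. First I would write out the cumulative consequences for the coefficients $A_0,\dots,A_7$. Theorem \ref{ThA0} together with the subsequent result $c_0 = 0$ kills $A_0$; Theorem \ref{ThA1A2} combined with the computation built on the sequences \eqref{seq0} and \eqref{seq1} forces $c_1 = c_2 = 0$ and simultaneously fixes $A_5 = c_5\tau$, $A_7 = -\frac{c_5}{2}\tau^2 + c_7\tau$. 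Substituting into \eqref{form-f} reduces $f$ to the shape \eqref{form-f2}, so only the triple $(A_3, A_4, A_6)$ remains to be described.

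Next I would split according to the dichotomy for $\tilde A_3$ coming from the analysis after \eqref{eqS1S3}: Lemma \ref{lem3.1} eliminates the polynomial alternative \eqref{F1}, and the lemma following \eqref{F2} narrows the exponential alternative to $\tilde A_3(w) = c_3 e^{-w/m}$ with $m$ a positive integer. In the subbranch $\tilde A_3 \equiv 0$, the commutator chain \eqref{seqR} together with its reversed analogue restricts $\tilde A_4, \tilde A_6$ to the forms \eqref{F41}--\eqref{F42} and \eqref{F61}--\eqref{F62}, and the subsequent $\mathrm{ad}$-iteration on $[S_3^{j+2}, S_3^j]$ couples them into the same family with a common $\lambda$; this delivers cases 1 and 2. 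In the subbranch $\tilde A_3 = c_3 e^{-w/m}$, the passage to $y = e^{w/m}$ and the associated finite-dimensionality arguments rule out every term in $\bar A_4, \bar A_6$ beyond the linear ones, so that after returning to $w$ one obtains $\tilde A_4(w) = c_{4,0} + c_{4,1} e^{-w/m}$ and $\tilde A_6(w) = c_{6,0} + c_{6,1} e^{-w/m}$; this delivers case 3.

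The final step is the purely formal back-substitution based on $A_s(\tau) = \tau\,\tilde A_s(\log \tau)$ for $s = 3, 4, 6$: polynomials of degree $d$ in $w$ become $\tau$ times polynomials of degree $d$ in $\log \tau$, and $e^{\alpha w}$ becomes $\tau^\alpha$, so the three $\tilde A$-triples translate termwise into the three lines of \eqref{2*}. The only delicate point I anticipate is the bookkeeping required to certify that the three cases in the statement are mutually exclusive and exhaustive: I must verify that a nonzero $\tilde A_3$ genuinely excludes the polynomial and pure-exponential forms of $\tilde A_4, \tilde A_6$ appearing in cases 1--2, and conversely that when $\tilde A_3 \equiv 0$ no case-3 tail of the type $\tau^{1-1/m}$ can sneak in outside the $\log$-polynomial or power families. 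Both checks follow from the coupling identities recorded above, but they are the single non-routine component of the assembly.
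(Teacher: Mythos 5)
Your proposal is correct and follows essentially the same route as the paper: Theorem~\ref{final} is presented there precisely as an assembly of the preceding results (Theorem~\ref{ThA0} with $c_0=0$, Theorem~\ref{ThA1A2} with $c_1=c_2=0$, Lemma~\ref{lem3.1} and the exponential lemma for $\tilde A_3$, and the case analysis \eqref{F41}--\eqref{F62}), followed by the back-substitution $A_s(\tau)=\tau\,\tilde A_s(\log\tau)$. The "delicate point" you flag — that $\tilde A_4$ and $\tilde A_6$ must belong to matching families with a common $\lambda$ — is exactly the coupling the paper establishes via the commutator $\left[S_3^{j+2},S_3^j\right]$ and its iterates, so nothing is missing.
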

В результате рассматриваемая классификационная задача свелась к уточнению значений постоянных параметров в формулах 1) -- 3) Теоремы~\ref{final}. 

\section{Дальнейший анализ цепочек класса 1)}

Цель настоящего раздела состоит в дальнейшем уточнении вида искомой функции \eqref{2*}, принадлежащей классу 1) Теоремы \ref{final}. Для этого мы воспользуемся наиболее простой нетривиальной редукцией цепочки \eqref{jnx-part}, \eqref{form-f} размерности два
\begin{align*}
&u^0_{n+1,x}=u^0_{n,x}+f(u^{1}_{n},u^{0}_n,u^0_{n+1}),\\
&u^1_{n+1,x}=u^1_{n,x}+f(u^{1}_n,u^1_{n+1},u^{0}_{n+1}),
\end{align*}
полученной в результате наложения на цепочку вырожденных условий обрыва $u^{-1}_n\equiv 0$, $u^{2}_n\equiv 0$. 
Эта система в силу \eqref{2*} очевидно принимает вид
\begin{align*}
&-\tau^0_{n,x}=A_4(\tau^0_n)\left(u^{1}_0+\rho^1_n\right)+A_5(\tau^0_n)\left(u^{0}_0+\rho^0_n\right)+A_7(\tau^0_n),\\
&-\tau^1_{n,x}=A_5(\tau^1_n)\left(u^{1}_0+\rho^1_n\right)+A_6(\tau^1_n)\left(u^{0}_0+\rho^0_{n+1}\right)+A_7(\tau^1_n),
\end{align*}
где $\tau^0_n=u^{0}_n-u^0_{n+1}$, $\tau^1_n=u^{1}_n-u^1_{n+1}$, $\rho^i_k$ задана формулой \eqref{Trho}. 

В качестве классификационного критерия, как и выше, берется условие конечномерности характеристической алгебры Ли-Райнхарта $L_x$, которая в этом случае порождается следующими тремя операторами:   
 $\bar{X}_0=\frac{\partial}{\partial u^0_0}$, $\bar{X}_1=\frac{\partial}{\partial u^1_0}$ и
\begin{align*}
Y=&\sum_{n}\left(A_4(\tau^0_n)\left(u^{1}_0+\rho^1_n\right)+A_5(\tau^0_n)\left(u^{0}_0+\rho^0_n\right)+A_7(\tau^0_n)\right)\frac{\partial}{\partial \tau^0_n}+\\
&+\sum_{n}\left(A_5(\tau^1_n)\left(u^{1}_0+\rho^1_n\right)+A_6(\tau^1_n)\left(u^{0}_0+\rho^0_{n+1}\right)+A_7(\tau^1_n)\right)\frac{\partial}{\partial \tau^1_n}.
\end{align*}
Легко заметить, что из конечномерности алгебры $L_x$ следует конечномерность алгебры $L'_x$, порожденной операторами
\begin{align*}
S_0=&\left[\bar{X}_0,Y\right]=\sum_nA_5(\tau^0_n)\frac{\partial}{\partial \tau^0_n}+\sum_nA_6(\tau^1_n)\frac{\partial}{\partial \tau^1_n},\\
S_1=&\left[\bar{X}_1,Y\right]=\sum_nA_4(\tau^0_n)\frac{\partial}{\partial \tau^0_n}+\sum_nA_5(\tau^1_n)\frac{\partial}{\partial \tau^1_n},\\
S=&Y-u^0_0S_0-u^1_0S_1=\sum^{+\infty}_{n=-\infty}\left(A_4(\tau^0_n)\rho^1_n+A_5(\tau^0_n)\rho^0_n+A_7(\tau^0_n)\right)\frac{\partial}{\partial \tau^0_n}+\\
&+\sum_{n}\left(A_5(\tau^1_n)\rho^1_n+A_6(\tau^1_n)\rho^0_{n+1}+A_7(\tau^1_n)\right)\frac{\partial}{\partial \tau^1_n}.
\end{align*}
Для удобства перейдем в этих операторах к переменным $w=\ln \tau$:
\begin{align*}
&S_0=\sum_n\tilde{A}_5(w^0_n)\frac{\partial}{\partial w^0_n}+\sum_n\tilde{A}_6(w^1_n)\frac{\partial}{\partial w^1_n},\qquad
S_1=\sum_n\tilde{A}_4(w^0_n)\frac{\partial}{\partial w^0_n}+\sum_n\tilde{A}_5(w^1_n)\frac{\partial}{\partial w^1_n},\\
&S=\sum_{n}\left(\tilde{A}_4(w^0_n)\tilde{\rho}^1_n+\tilde{A}_5(w^0_n)\tilde{\rho}^0_n+A_7(w^0_n)\right)\frac{\partial}{\partial w^0_n}+\\
&\qquad+\sum_{n}\left(\tilde{A}_5(w^1_n)\tilde{\rho}^1_n+\tilde{A}_6(w^1_n)\tilde{\rho}^0_{n+1}+A_7(w^1_n)\right)\frac{\partial}{\partial w^1_n}.
\end{align*}
\begin{lemma}\label{important}
Если нарушается хотя бы одно из следующих условий
\begin{align}\label{cond-ci}
c_{4,2}\left(4c_{4,0}c_{4,2}-c^2_{4,1}\right)=0, \qquad c_{6,2}\left(4c_{6,0}c_{6,2}-c^2_{6,1}\right)=0,
\end{align}
то $\dim L_x=\infty$.
\end{lemma}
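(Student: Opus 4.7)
I propose to argue by contradiction: assume one of the conditions in \eqref{cond-ci} is violated and produce an infinite sequence of linearly independent elements of $L'_x$. The two cases in \eqref{cond-ci} are exchanged by the obvious symmetry $w^0 \leftrightarrow w^1$, $A_4 \leftrightarrow A_6$ in the operators $S_0, S_1, S$, so it suffices to treat the case $c_{4,2}(4c_{4,0}c_{4,2} - c_{4,1}^2) \neq 0$. Under this assumption the polynomial $\tilde{A}_4(w) = c_{4,0} + c_{4,1}w + c_{4,2}w^2$ is a genuine quadratic with two distinct roots.

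The first step is to distill from $L'_x$ vector fields acting only on the variables $w^0_n$. Since $\tilde{A}_5 \equiv c_5$ is constant, suitable commutators of $S_0, S_1$ with $\bar{X}_0, \bar{X}_1$ separate the $w^0$-part from the $w^1$-part, yielding both
\[
T_0 = \sum_n \frac{\partial}{\partial w^0_n} \quad\text{and}\quad T = \sum_n \tilde{A}_4(w^0_n) \frac{\partial}{\partial w^0_n}
\]
inside $L'_x$. The iterated brackets $[T_0, T]$ and $[T_0, [T_0, T]]$ reproduce $\tilde{A}_4'$ and $\tilde{A}_4''$ as coefficients and, since $c_{4,2} \neq 0$, show that the three monomial fields $\sum_n (w^0_n)^j \partial/\partial w^0_n$ for $j = 0, 1, 2$ all lie in $L'_x$ and are linearly independent.

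The crucial second step is to couple these polynomial generators with the nonlocal operator $S$, whose coefficient of $\partial/\partial w^0_n$ contains the factor $\tilde{A}_4(w^0_n)\,\tilde{\rho}^1_n$. Mimicking the scheme used in the proof of Theorem \ref{SQH-1} for the operators $P^j_m$, I take
\[
Q_0 := [\,\textstyle\sum_k w^0_k\, \partial/\partial w^0_k,\; S\,], \qquad Q_m := [\,\textstyle\sum_k w^0_k\, \partial/\partial w^0_k,\; Q_{m-1}\,],
\]
and verify by induction that
\[
Q_m = \sum_n c_{4,2}\,(w^0_n)^{m+1}\, \tilde{\rho}^1_n\, \frac{\partial}{\partial w^0_n} + R_m, \qquad m \geq 0,
\]
where $R_m$ collects strictly lower-degree polynomial terms together with $w^1$-contributions that can be subtracted away using the $\bar{H}^j$-type elements already known to belong to $L_x$. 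Linear independence of the leading terms for different $m$ then follows from a Vandermonde-type determinantal argument on a sufficient set of test points, exhibiting $\{Q_m\}_{m \geq 0}$ as an infinite independent system in $L'_x$, which contradicts $\dim L_x < \infty$.

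The main obstacle is pinning down the role of the discriminant: the inductive step goes through only because the roots of $\tilde{A}_4$ are distinct. In the degenerate case $4c_{4,0}c_{4,2} - c_{4,1}^2 = 0$, one has $\tilde{A}_4(w) = c_{4,2}(w - \alpha)^2$; after the shift $w \mapsto w - \alpha$ the operator $\sum_n (w^0_n)^2 \partial/\partial w^0_n$, already available in $L'_x$, absorbs the leading term of each $Q_m$ and the sequence terminates. Hence it is precisely the non-degeneracy $4c_{4,0}c_{4,2} - c_{4,1}^2 \neq 0$ (together with $c_{4,2} \neq 0$) that obstructs such a closure and forces the infinite-dimensional tower, which is the content of the lemma.
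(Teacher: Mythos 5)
Your overall strategy (iterate commutators with $\sum_k w^0_k\,\partial/\partial w^0_k$ against $S$ to build an infinite independent family) is the same as the paper's, and you even pick the same sequence $\mathrm{ad}^m_{Q_2}S$. But two of your key steps are wrong, and they are exactly the steps where the content of the lemma lives.

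First, the "separation" step. You cannot extract $T_0=\sum_n\partial/\partial w^0_n$ and $T=\sum_n\tilde A_4(w^0_n)\,\partial/\partial w^0_n$ from $L'_x$ by commutating $S_0,S_1$ with $\bar X_0,\bar X_1$: the coefficients of $S_0,S_1$ do not depend on $u^0_0,u^1_0$, so those brackets vanish, and $S_0,S_1$ themselves irreducibly mix the $w^0$- and $w^1$-layers. If your separation worked, only $c_{4,2}\neq0$ would be needed and the discriminant factor $4c_{4,0}c_{4,2}-c_{4,1}^2$ in the hypothesis would be irrelevant — a strong sign the route is wrong. The paper instead computes $\mathrm{ad}^3_{S_1}S_0$: the $w^1$-part dies because $\tilde A_6'''\equiv0$, and the surviving $w^0$-part is proportional to $(4c_{4,0}c_{4,2}-c_{4,1}^2)(2c_{4,2}w^0_n+c_{4,1})$; one more bracket with $S_0$ gives $c_{4,2}(4c_{4,0}c_{4,2}-c_{4,1}^2)\sum_n\partial/\partial w^0_n$. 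This is simultaneously the derivation of $Q_1,Q_2\in L_x$ and the place where the full product in \eqref{cond-ci} (not just $c_{4,2}$) is forced to be nonzero.

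Second, your claimed leading term of $Q_m$ is false. Since $\tilde\rho^1_n$ depends only on the $w^1$-variables, $Q_2=\sum_k w^0_k\,\partial/\partial w^0_k$ does not act on it, and the term $\tilde A_4(w^0_n)\tilde\rho^1_n$ in the $\partial/\partial w^0_n$-coefficient of $S$ evolves under $\mathrm{ad}_{Q_2}$ by $g\mapsto w\,g'-g$; for a polynomial $g$ in $w^0_n$ this does not raise the degree ($w\partial_w w^j - w^j=(j-1)w^j$), so the tower $c_{4,2}(w^0_n)^{m+1}\tilde\rho^1_n$ you induct on simply does not appear. The genuine source of unbounded growth is the dependence on $e^{w^0_i}$: $(w\partial_w)^m e^w=p_m(w)e^w$ with $\deg p_m=m$. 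The paper exploits this in the $\partial/\partial w^1_n$-components, whose coefficients after $m$ steps are $\tilde A_6(w^1_n)\,Q_2^m(\tilde\rho^0_{n+1})$, i.e.\ $\tilde A_6(w^1_n)$ times sums of $p_m(w^0_i)e^{w^0_i}$ with growing degree; linear independence is then immediate. Your final paragraph on the degenerate case is also off target: the lemma makes no claim there, and if $4c_{4,0}c_{4,2}-c_{4,1}^2=0$ the construction never starts because $\mathrm{ad}^3_{S_1}S_0=0$, rather than "terminating by absorption".
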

\begin{proof} Учитывая, что функция $f$ принадлежит классу 1) теоремы~\ref{final}, найдем кратные коммутаторы
\begin{align*}
&ad^3_{S_1}S_0=c_5\sum_n\left(4c_{4,0}c_{4,2}-c^2_{4,1}\right)\left(2c_{4,2}w^0_n+c_{4,1}\right)\frac{\partial}{\partial w^0_n},\\
&ad_{S_0}ad^3_{S_1}S_0=c^2_5\sum_n2c_{4,2}\left(4c_{4,0}c_{4,2}-c^2_{4,1}\right)\frac{\partial}{\partial w^0_n}.
\end{align*}
Если нарушается первое условие леммы, т.е. $c_{4,2}\left(4c_{4,0}c_{4,2}-c^2_{4,1}\right)\neq 0$, то операторы $Q_1:=\sum_n\frac{\partial}{\partial w^0_n}$ и $Q_2:=\sum_nw^0_n\frac{\partial}{\partial w^0_n}$ лежат в $L_x$. Тогда, как легко заметить, операторы $ad^m_{Q_2}S$ допускают простое явное представление. Для $m=1,2$ имеем
\begin{align*}
&ad_{Q_2}S=\sum_nF^1_n\frac{\partial}{\partial w^0_n}+\sum_n\tilde{A}_6(w^1_n)\tilde{\rho}^0_{n+1}w^0_n\frac{\partial}{\partial w^1_n},\\
&ad^2_{Q_2}S=\sum_nF^2_n\frac{\partial}{\partial w^0_n}+\sum_n\tilde{A}_6(w^1_n)\tilde{\rho}^0_{n+1}\left((w^0_n)^2+w^0_n\right)\frac{\partial}{\partial w^1_n},
\end{align*}
а при $m\geq 3$:
\begin{align*}
ad^2_{Q_2}S=\sum_nF^m_n\frac{\partial}{\partial w^0_n}+\sum_n\tilde{A}_6(w^1_n)\tilde{\rho}^0_{n+1}\left((w^0_n)^m+c_{m-1}(w^0_n)^{m-1}+\ldots+c_1w^0_n\right)\frac{\partial}{\partial w^1_n},
\end{align*}
где $F_n^m$~-- некоторая функция, явный вид которой мы не уточняем. Ясно, что линейная оболочка семейства операторов $ad^m_{Q_2}S$ имеет бесконечную размерность, поэтому $\dim L_x=\infty$. Вторая часть леммы доказывается аналогично.
\end{proof}

\begin{lemma}\label{lem-ci}
Пусть выполнены условия \eqref{cond-ci}. Тогда для конечномерности алгебры $L_x$, порожденной операторами $S_0$, $S_1$, $S$ необходимо, чтобы параметры $c_{4,1}$, $c_{4,2}$, $c_{6,1}$ и $c_{6,2}$ удовлетворяли равенствам:
\begin{align*}
c_{4,1}=c_{4,2}=c_{6,1}=c_{6,2}=0.
\end{align*}
\end{lemma}

{\it Схема доказательства.} При выполнении \eqref{cond-ci} возникают семь непересекающихся наборов условий:
\begin{enumerate}
  \item[1i)] $c_{4,2}=0, \quad c_{6,2}=0, \quad 4c_{4,0}c_{4,2}= c^2_{4,1}, \quad 4c_{6,0}c_{6,2}= c^2_{6,1}$;
	\item[2i)] $c_{4,2}=0, \quad c_{6,2}=0, \quad 4c_{4,0}c_{4,2}= c^2_{4,1}, \quad 4c_{6,0}c_{6,2}\neq c^2_{6,1}$;
	\item[3i)] $c_{4,2}=0, \quad c_{6,2}=0, \quad 4c_{4,0}c_{4,2}\neq c^2_{4,1}, \quad 4c_{6,0}c_{6,2}= c^2_{6,1}$;
	\item[4i)] $c_{4,2}=0, \quad c_{6,2}=0, \quad 4c_{4,0}c_{4,2}\neq c^2_{4,1}, \quad 4c_{6,0}c_{6,2}\neq c^2_{6,1}$;
	\item[5i)] $c_{4,2}=0, \quad c_{6,2}\neq0, \quad 4c_{4,0}c_{4,2}\neq c^2_{4,1}, \quad 4c_{6,0}c_{6,2}= c^2_{6,1}$;
	\item[6i)] $c_{4,2}\neq0, \quad c_{6,2}=0, \quad 4c_{4,0}c_{4,2}=c^2_{4,1}, \quad 4c_{6,0}c_{6,2}\neq c^2_{6,1}$;
	\item[7i)] $c_{4,2}\neq0, \quad c_{6,2}\neq0, \quad 4c_{4,0}c_{4,2}= c^2_{4,1}, \quad 4c_{6,0}c_{6,2}= c^2_{6,1}$.
\end{enumerate}
Из этих вариантов в действительности реализуется только вариант $1i)$. Во всех остальных случаях характеристическая алгебра бесконечномерна. Рассмотрение случаев $2i)-6i)$ вполне аналогично доказательству Леммы~\ref{important}. Рассмотрение случая $7i)$ несколько отличается. Здесь мы пользуемся последовательностью кратных коммутаторов вида
\begin{align*}
S_0, \quad S, \quad [S_0,S], \quad [S_0,[S_0,S]], \quad [S_0,[S_0,[S_0,S]]], \quad \ldots,
\end{align*}
которая содержит бесконечное множество линейно независимых элементов.

Ниже нам понадобится следующий дискретный вариант леммы Шабата (см. книгу~\cite{ZMHS})
\begin{lemma}\label{DS}
Предположим, что векторное поле вида
\begin{align*}
K=\sum_{j=1}^N\sum_{n} a^j_n \frac{\partial}{\partial w^j_n},
\end{align*}
где для всех $j=1,2,\dots N$ выполнено условие $ a^j_0=0$, удовлетворяет соотношению
\begin{align*}
D_n K D_n^{-1}=hK
\end{align*}
с некоторым множителем $h$, тогда $K=0$.
\end{lemma}
Доказательство Леммы \ref{DS} можно найти в работе \cite{HabKh21}.
Имеет место следующая теорема.
\begin{theorem}\label{3.2}
Предположим, что цепочка \eqref{jnx-part}, \eqref{form-f} интегрируема в смысле определения \ref{def} и принадлежит классу 1) Теоремы \ref{final}. Тогда если выполнено условие 
\begin{align}\label{Yc}
(c_5-c_{4,0})(c_5-c_{6,0})\neq 0,
\end{align}
то цепочка точечными заменами приводится к виду
\begin{align}\label{F01}
u^j_{n+1,x}= u^j_{n,x}+(u^j_n-u^j_{n+1})(u^j_n+u^j_{n+1} -Mu^{j+1}_n-\bar M u^{j-1}_{n+1}),
\end{align}
где $M\geq1$ и $\bar M\geq1$ некоторые целые числа. А если это условие нарушается, то цепочка заменой приводится к виду
\begin{align}\label{F02}
u^j_{n+1,x}= u^j_{n,x}+(u^j_n-u^j_{n+1})\left(u^j_n+u^j_{n+1}+C_{4}u^{j+1}_n+C_{6} u^{j-1}_{n+1}\right),
\end{align}
где $C_4=\frac{2c_{4,0}}{c_5}$, $C_6=\frac{2c_{6,0}}{c_5}$ произвольные постоянные, при этом выполняется хотя бы одно из следующих двух условий: 1) $C_4=2$, 2) $C_6=2$.
\end{theorem}

\begin{proof}
Пусть сначала выполняется условие \eqref{Yc}. Уточним постоянные $c_{4,0}$ и $c_{6,0}$. В силу Леммы~\ref{lem-ci} операторы $S_0$, $S_1$, $S$ принимают вид:
\begin{align*}
&S_0=\sum_nc_5\frac{\partial}{\partial w^0_n}+\sum_nc_{6,0}\frac{\partial}{\partial w^1_n},\qquad
S_1=\sum_nc_{4,0}\frac{\partial}{\partial w^0_n}+\sum_nc_5\frac{\partial}{\partial w^1_n},\\
&S=\sum^{+\infty}_{n=-\infty}\left(c_{4,0}\tilde{\rho}^1_n+c_5\tilde{\rho}^0_n-\frac{c_5}{2}e^{w^0_n}+c_7\right)\frac{\partial}{\partial w^0_n}+\sum_{n}\left(c_5\tilde{\rho}^1_n+c_{6,0}\tilde{\rho}^0_{n+1}-\frac{c_5}{2}e^{w^1_n}+c_7\right)\frac{\partial}{\partial w^1_n}.
\end{align*}
Заметим, что при выполнении условия \eqref{Yc} можно рассматривать операторы вида 
\begin{align*}
&P=\frac{1}{c_{4,0}(c_5-c_{4,0})}\left(c_5[S_1,S]-[S_1,[S_1,S]]\right), \\
&Q=\frac{1}{c_{6,0}(c_5-c_{6,0})}\left(c_5[S_0,S]-[S_0,[S_0,S]]\right),
\end{align*}
поскольку постоянные  $c_{4,0}$ и $c_{6,0}$ отличны от нуля по условию задачи (см. \eqref{jnx-part}). Операторы $P$ и $Q$ удобны тем, что они допускают сравнительно простое 
 координатное представление
\begin{align*}
&P=c_5\sum_n\left(\tilde{\rho}^0_n-\frac{1}{2}e^{w^0_n}\right)\frac{\partial}{\partial w^0_n}+c_{6,0}\sum_n\tilde{\rho}^0_{n+1}\frac{\partial}{\partial w^1_n}, \\
&Q=c_{4,0}\sum_n\tilde{\rho}^1_{n}\frac{\partial}{\partial w^0_n}+c_5\sum_n\left(\tilde{\rho}^1_n-\frac{1}{2}e^{w^1_n}\right)\frac{\partial}{\partial w^1_n}.
\end{align*}

Рассмотрим следующую бесконечную последовательность  операторов
\begin{align*}
Q, \quad P, \quad Q_1=[Q,P], \quad Q_2=[Q,Q_1], \quad Q_3=[Q,Q_2], \quad \ldots.
\end{align*}
Выясним действие автоморфизма \eqref{auto-main} на элементы этой последовательности:
\begin{align*}
D_nQD^{-1}_n=&Q+e^{w^1_0}S_1, \quad D_nPD^{-1}_n=P+e^{w^0_0}S_0, \quad D_nS_0D^{-1}_n=S_0, \quad D_nS_1D^{-1}_n=S_1,\\
D_nQ_1D^{-1}_n=&Q_1+c_{4,0}e^{w^1_0}P-c_{6,0}e^{w^0_0}Q+c_{4,0}e^{w^1_0+w^0_0}S_0,\\
D_nQ_2D^{-1}_n=&Q_2+(c_5+2c_{4,0})e^{w^1_0}Q_1+c_{4,0}\left(\frac{c_5}{2}+c_{4,0}\right)e^{2w^1_0}P-\\
&-c_{6,0}(c_5+2c_{4,0})e^{w^1_0+w^0_0}Q+c_{4,0}\left(\frac{c_5}{2}+c_{4,0}\right)e^{2w^1_0+w^0_0}S_0,\\
D_nQ_3D^{-1}_n=&Q_3+3(c_5+c_{4,0})e^{w^1_0}Q_2+3\left(\frac{c_5}{2}+c_{4,0}\right)(c_5+c_{4,0})e^{2w^1_0}Q_1+\\
&+c_{4,0}\left(\frac{c_5}{2}+c_{4,0}\right)(c_5+c_{4,0})e^{3w^1_0}P-3c_{6,0}\left(\frac{c_5}{2}+c_{4,0}\right)(c_5+c_{4,0})e^{2w^1_0+w^0_0}Q+\\
&+c_{4,0}\left(\frac{c_5}{2}+c_{4,0}\right)(c_5+c_{4,0})e^{3w^1_0+w^0_0}S_0.
\end{align*}
Можно доказать по индукции, что в общем случае $m\geq2$ автоморфизм алгебры действует так
\begin{align*}
D_nQ_mD^{-1}_n=&Q_m+\alpha_{m,1}e^{w^1_0}Q_{m-1}+\alpha_{m,2}e^{2w^1_0}Q_{m-2}+\ldots+\alpha_{m,m-1}e^{(m-1)w^1_0}Q_{1}+\\
&+\alpha_{m,m}e^{mw^1_0}P+\alpha_{m,m+1}e^{(m-1)w^1_0+w^0_0}Q+\alpha_{m,m+2}e^{mw^1_0+w^0_0}S_0,
\end{align*}
где все коэффициенты $\alpha_{m,j}$ постоянны. При этом первый из них имеет вид
\begin{align}\label{alpham1}
\alpha_{m,1}=m\left((m-1)\frac{c_5}{2}+c_{4,0}\right).
\end{align}
В силу конечномерности алгебры $L_x$ существует такой номер $M$, что операторы $Q_M$, $Q_{M-1}$, \ldots, $Q_1$, $Q$, $P$, $S_0$, $S_1$~-- линейно независимы, а оператор $Q_{M+1}$ выражается в виде
\begin{align}\label{PQm}
Q_{M+1}=\lambda_1Q_{M}+\lambda_2Q_{M-1}+\ldots+\lambda_MQ_{1}+\lambda_{M+1}P+\lambda_{M+2}Q+\lambda_{M+3}S_0+\lambda_{M+4}S_1.
\end{align}
Применив к равенству \eqref{PQm} автоморфизм получим
\begin{align}\label{eqQm}
\begin{aligned}
\lambda_1Q_{M}&+\lambda_2Q_{M-1}+\lambda_3Q_{M-2}+\ldots+\lambda_{M+4}S_1+\alpha_{M+1,1}e^{w^1_0}Q_{M}+\\
&+\alpha_{M+1,2}e^{2w^1_0}Q_{M-1}+\ldots+\alpha_{M+1,M+3}e^{(M+1)w^1_0+w^0_0}S_0=\\
=&D_n(\lambda_1)\left(Q_M+\alpha_{M,1}e^{w^1_0}Q_{M-1}+\alpha_{M,2}e^{2w^1_0}Q_{M-2}+\ldots\right)+\\
&+D_n(\lambda_2)\left(Q_{M-1}+\alpha_{M-1,1}e^{w^1_0}Q_{M-2}+\ldots\right)+\ldots+D_n(\lambda_{M+4})S_1.
\end{aligned}
\end{align}
Сравнивая коэффициенты при независимых операторах получим систему уравнений для определения коэффициентов $\lambda_i$. Имеем:
\begin{align*}
Q_{M}: \quad D_n(\lambda_1)-\lambda_1=\alpha_{M+1,1}e^{w^1_0},
\end{align*}
откуда сразу вытекает, что $\lambda_1=\const$, $\alpha_{M+1,1}=0$, или в силу \eqref{alpham1}
\begin{align}\label{c40}
c_{4,0}=-\frac{1}{2}c_5M, \quad M\geq 1.
\end{align}
Далее находим
\begin{align*}
Q_{M-1}: \quad D_n(\lambda_2)-\lambda_2=\alpha_{M+1,2}e^{2w^1_0}-\lambda_1\alpha_{M,1}e^{w^1_0}.
\end{align*}
Отсюда имеем $\lambda_2=\const$, $\alpha_{M+1,2}=0$, $\lambda_1\alpha_{M,1}=0$.
\begin{align*}
Q_{M-2}: \quad D_n(\lambda_3)-\lambda_3=\alpha_{M+1,3}e^{3w^1_0}-\lambda_1\alpha_{M,2}e^{2w^1_0}-\lambda_2\alpha_{M-1,1}e^{w^1_0}.
\end{align*}
В итоге находим $\lambda_3=\const$, $\alpha_{M+1,3}=0$, $\lambda_1\alpha_{M,2}=0$, $\lambda_2\alpha_{M-1,1}=0$.

Далее сравнивая коэффициенты при всех остальных независимых операторах получаем, что для $\forall i$ $\alpha_{M+1,i}=0$, $\lambda_1\alpha_{M,i}=0$, $\lambda_k\alpha_{M-k+1,i}=0$. Отсюда следует, что 
\begin{align}\label{autoQM1}
D_nQ_{M+1}D_n^{-1}=Q_{M+1}.
\end{align}
Покажем, что из равенства \eqref{autoQM1} следует, что $Q_{M+1}=0$. С этой целью более детально исследуем явное координатное представление операторов последовательности. Для нескольких первых элементов  имеем
\begin{align*}
Q_1=&c_{4,0}\sum_n\tilde{\rho}^1_n\left(c_5\tilde{\rho}^0_n-\frac{c_5}{2}e^{w^0_n}-c_{6,0}\tilde{\rho}^0_{n+1}\right)\frac{\partial}{\partial w^0_n}+c_{6,0}\sum_n\tilde{\rho}^0_{n+1}\left(c_{4,0}\tilde{\rho}^1_n-c_5\tilde{\rho}^1_n+\frac{c_5}{2}e^{w^1_n}\right)\frac{\partial}{\partial w^1_n},\\
Q_2=&c_{4,0}\sum_n\tilde{\rho}^1_n\left(\left(c_5\tilde{\rho}^0_n-\frac{c_5}{2}e^{w^0_n}-c_{6,0}\tilde{\rho}^0_{n+1}\right)\left(c_{4,0}\tilde{\rho}^1_n+c_5\tilde{\rho}^1_n-\frac{c_5}{2}e^{w^1_n}\right)\right.-\\&\left.-c_{6,0}\tilde{\rho}^0_{n+1}\left(c_{4,0}\tilde{\rho}^1_n-c_5\tilde{\rho}^1_n+\frac{c_5}{2}e^{w^1_n}\right)\right)\frac{\partial}{\partial w^0_n}+\\
&+c_{4,0}c_{6,0}\sum_n\tilde{\rho}^1_n\tilde{\rho}^0_{n+1}\left(c_{4,0}\tilde{\rho}^1_n-c_5\tilde{\rho}^1_n+\frac{c_5}{2}e^{w^1_n}\right)\frac{\partial}{\partial w^1_n},
\end{align*}
Для общего случая $m\geq2$ справедливо представление
\begin{align*}
Q_m=&\sum_n\tilde{\rho}^1_n\left(F_n^m\frac{\partial}{\partial w^0_n} + G_n^m\frac{\partial}{\partial w^1_n}\right)
\end{align*}
где $F^m_n$, $G^m_n$~-- некоторая гладкие функции. При $m=2$ это легко следует из явной формулы. При $m>2$ можно доказать по индукции пользуясь тем, что оператор $\frac{\partial}{\partial w^j_n}$ переводит $\rho^1_n$ либо в ноль либо в $\rho^1_n$.

Характерная деталь всех этих формул, начиная с $Q_2$ состоит в том, что коэффициенты при операторах дифференцирования вида $\frac{\partial}{\partial w^0_0}$ и  $\frac{\partial}{\partial w^1_0}$ равны нулю, поскольку $\rho^1_0=0$. Тогда в силу Леммы \ref{DS} из равенства \eqref{autoQM1} немедленно вытекает, что $Q_{M+1}=0$, при $M\geq 2$. Следовательно, линейное пространство над кольцом функций от динамических переменных, натянутое на элементы рассматриваемой последовательности имеет конечную размерность тогда и только тогда, когда выполняется равенство \eqref{c40}, выражающее связь между параметрами $c_{4,0}$ и $c_5$. Применяя аналогичные рассуждения к последовательности операторов
\begin{align*}
Q, \quad P, \quad P_1=[P,Q], \quad P_2=[P,P_1], \quad P_3=[P,P_2], \quad \ldots.
\end{align*}
можно показать, что из конечномерности алгебры $L_x$ вытекает равенство
\begin{align}\label{c60}
c_{6,0}=-\frac{1}{2}c_5\bar M, \quad \bar M\geq 1.
\end{align}
Формулы \eqref{c40} и \eqref{c60} устанавливают связь между константами цепочки \eqref{jnx-part}, \eqref{2*}. В результате этого цепочка существенно упрощается и при помощи замен переменных легко приводится к виду \eqref{F01}.

Пусть теперь условие \eqref{Yc} нарушается. Тогда цепочка имеет вид:
\begin{align*}
u^j_{n+1,x}= u^j_{n,x}+(u^j_n-u^j_{n+1})\left(\frac{1}{2}c_5u^j_n+\frac{1}{2}c_5u^j_{n+1}+c_{4,0}u^{j+1}_n+c_{6,0} u^{j-1}_{n+1}\right)+c_7(u^j_n-u^j_{n+1}),
\end{align*}
где предполагается, что выполняется хотя бы одно из условий $c_{4,0}=c_5$, $c_{6,0}=c_5$.
Ясно, что точечными заменами эту цепочку можно привести к виду \eqref{F02}.
\end{proof}

\section{Обсуждение конкретных примеров}

Найденные в Теореме \ref{final} классы 2) и 3), а также цепочки \eqref{F01} и \eqref{F02}, полученные в результате дальнейшей работы с классом 1) нуждаются в дополнительном исследовании с привлечением характеристической алгебры по дискретному направлению. Однако уже на этом этапе нам удалось обнаружить довольно интересные примеры.

{\bf Пример 1.} Полагая $C_{4}=C_{6}=2$ в \eqref{F02} приходим к цепочке
\begin{align}\label{F21}
u^j_{n+1,x}= u^j_{n,x}+(u^j_n-u^j_{n+1})(u^j_n+u^j_{n+1}+2u^{j+1}_n+2u^{j-1}_{n+1}).
\end{align}
А выбирая $M=\bar M=2$ в \eqref{F01} получаем очень близкую к ней цепочку
\begin{align}\label{F12}
u^j_{n+1,x}= u^j_{n,x}+(u^j_n-u^j_{n+1})(u^j_n+u^j_{n+1}-2u^{j+1}_n-2u^{j-1}_{n+1}).
\end{align}
Для обеих цепочек редукции имеют конечномерные характеристические алгебры по направлению $x$. Для одномерной редукции $u^j_{n+1,x}= u^j_{n,x}+(u^j_n)^2-(u^j_{n+1})^2$ это следует из результатов работы \cite{Adler}. Для редукций в виде систем двух и трех уравнений конечномерность алгебры $L_x$ была проверена М.Н.~Кузнецовой\footnote{Частное сообщение}. Вопрос о свойствах характеристической алгебры по направлению $n$ для этих редукций остается открытым.

{\bf Пример 2.} Наиболее интересной является цепочка соответствующая выбору $M=\bar M=1$ в представлении \eqref{F01} 
\begin{align}\label{F31}
u^j_{n+1,x}= u^j_{n,x}+(u^j_n-u^j_{n+1})(u^j_n+u^j_{n+1}-u^{j+1}_n-u^{j-1}_{n+1}).
\end{align}
Эта цепочка полностью проходит наш тест как по направлению $x$, так и по направлению $n$, т.е. она является интегрируемой в смысле определения \eqref{def}. Цепочка \eqref{F31} является интегрируемой и в общепринятом смысле, что подтверждается наличием пары Лакса (см. \cite{HKhS23}).

\section*{Благодарности} Исследование выполнено за счет гранта Российского научного фонда №21-11-00006, https://rscf.ru/project/21-11-00006/.


Исмагил Талгатович Хабибуллин (ответственный за переписку)\\
Институт математики с ВЦ УФИЦ РАН,\\ ул. Чернышевского, 112,\\ 450008, г.Уфа, Россия \\ 
электронная почта {habibullinismagil@gmail.com}\\

Айгуль Ринатовна Хакимова\\
Институт математики с ВЦ УФИЦ РАН,\\ ул. Чернышевского, 112,\\ 450008, г.Уфа, Россия \\ 
электронная почта {aigul.khakimova@mail.ru}\\

\end{document}